\newcommand{\RNum}[1]{\uppercase\expandafter{\romannumeral #1\relax}}
\newtheorem{definition}{Definition}
\newtheorem{lemma}{Lemma}
\newtheorem{theorem}{Theorem}
\newtheorem{corollary}{Corollary}
\newtheorem{assumption}{Assumption}
\newtheorem{problem}{Problem}
\newtheorem{proposition}{Proposition}
\newtheorem{remark}{Remark}
\title{\LARGE \bf
Primal-dual Estimator Learning: an Offline Constrained Moving Horizon Estimation Method with Feasibility and Near-optimality Guarantees
}
\author{Wenhan Cao, Jingliang Duan, Shengbo Eben Li, Chen Chen, Chang Liu, Yu Wang
\thanks{W. Cao, J. Duan, S. E. Li and C. Chen are with the School of Vehicle and Mobility, Tsinghua University, Beijing, 100084, China. {\tt\small Email: (cwh19, chenchen2020)@mails.tsinghua.edu.cn, duanjl15@163.com, lisb04@gmail.com}.}
\thanks{C. Liu is with the Department of Advanced Manufacturing and Robotics, Peking University, Beijing 100871, China. 
{\tt\small Email: changliucoe@pku.edu.cn}.}
\thanks{Y. Wang is with the Department of Electronic Engineering, Tsinghua University, Beijing, 100084, China. 
{\tt\small Email: yu-wang@tsinghua.edu.cn}.}
\thanks{Corresponding author: S. E. Li
}
}
\begin{document}

\maketitle
\thispagestyle{empty}
\pagestyle{empty}

\begin{abstract}
This paper proposes a primal-dual framework to learn a stable estimator for linear constrained estimation problems leveraging the moving horizon approach. To avoid the online computational burden in most existing methods, we learn a parameterized function offline to approximate the primal estimate. Meanwhile, a dual estimator is trained to check the suboptimality of the primal estimator during
execution time. Both the primal and dual estimators are learned from data using
supervised learning techniques, and the explicit sample size is provided, which enables us to guarantee the quality of each learned
estimator in terms of feasibility and optimality. This in
turn allows us to bound the probability of the learned estimator
being infeasible or suboptimal. Furthermore, we analyze the stability of the resulting estimator with a bounded error in the minimization of the cost function. Since our algorithm does
not require the solution of an optimization problem during runtime, state estimates can be generated online
almost instantly.
Simulation results are
presented to show the accuracy and time efficiency of the proposed framework compared to online optimization of 
moving horizon estimation and Kalman filter. To the best of our knowledge, this is the first learning-based state estimator with feasibility and near-optimality guarantees for linear constrained systems.
\end{abstract}
\setstretch{0.944}
\section{Introduction}
Estimating the state of a stochastic system is a long-lasting issue in the areas of engineering and science. It  draws much attention in different domains such as signal processing,  robotics, and econometrics \cite{musoff2009fundamentals}. 
For linear
systems, the Kalman filter gives the optimal estimate when
the process and the measurement noise obey Gaussian distributions \cite{KalmanFilter}. However, it is difficult to be applied in one typical case where states or disturbances are subjected to inequality constraints, especially for nonlinear constraints \cite{amor2018constrained}. Considering these constraints is crucial for bounded disturbances modeling, which will greatly facilitate the improvement of state estimation accuracy.


In contrast to the Kalman filter, moving horizon
estimation (MHE) offers the possibility
of incorporating constraints on the estimated systems \cite{allgower1999nonlinear,rao2003constrained,rawlings2012optimization}. At each instant, it is required to find a trajectory of state estimates online by solving a finite-horizon constrained optimization problem relying on recent measurements. 
It is shown that model predictive control (MPC) and MHE share symmetric structures \cite{rao2000moving}. This means that, similar to MPC, implementing MHE on fast dynamical systems with limited computation capacity remains generally challenging due to the heavy online computational burden.
To accelerate the online solving of MHE, a variety of fast optimization techniques have been proposed, including the interior-point nonlinear programming technique \cite{zavala2007fast,zavala2008fast,wachter2006implementation} and real-time iteration-based automatic code generation \cite{ferreau2012high}.

Compared with online MHE solvers, learning approximating MHE estimation laws offline can significantly improve the online estimation efficiency \cite{alessandri1999neural,alessandri2008moving,alessandri2011moving}. In particular, one can parameterize the MHE estimator using neural networks or a linear combination of basis functions, and then find a parameterized estimator that minimizes the MHE cost using supervised learning techniques. Some studies also apply reinforcement learning or variational inference to  obtain an offline estimator \cite{cao2021reinforced,li2020reinforcement, krishnan2015deep,karl2016deep}. However, existing offline estimation methods lack the ability to verify the estimation accuracy in real-time during the online application process. Nevertheless, in practice, it is critical to verify a state estimate
before it is utilized by a controller to ensure control performance. Besides, these methods also fail to certify the feasibility of the estimation law when considering constrained disturbances.


Inspired by the recently proposed primal-dual MPC framework \cite{zhang2019safe,zhang2020near}, this paper presents a primal-dual estimator learning method to learn an offline primal estimator with feasibility and stability guarantees, whose online optimality can be quantitatively evaluated in real-time using an offline dual estimator.
Specifically, our contributions can be summarized as follows:
\begin{enumerate}
\item Given a general constrained MHE problem, we establish the explicit form of its dual problem by introducing a minimum distance Euclidean projection
function.  Existing forms derived in \cite{goodwim2004duality,goodwin2005lagrangian} can be deemed as a special case of our setting, which considers both the discounted factor and disturbance constraints. 

\item In the offline phase, we employ a supervised learning scheme to train the primal and dual estimators and evaluate the feasibility and near-optimality of the trained estimator using a randomized verification methodology. Given an admissible probability of feasibility and suboptimality violation, the minimum sample sizes are provided for
the verification step. In the online phase, the primal estimator outputs a state estimate. In the meantime, we use the dual estimator to check the near-optimality of the current estimate using ideas from weak duality theory. If the check fails, we implement a backup estimator (such as an online MHE method) to guarantee the estimation accuracy. Therefore, in contrast to most existing offline methods \cite{alessandri1999neural,alessandri2008moving,alessandri2011moving,cao2021reinforced,krishnan2015deep,karl2016deep,li2020reinforcement}, our learning scheme guarantees the feasibility and near-optimality of the primal estimator.


\item Finally, we analyze the stability of the learned estimator, which shows that an upper bound of the state estimation error exists for any possible value of the estimator learning error under moderate assumptions.
\end{enumerate}

The remainder of this paper is organized as follows.  Section \ref{sec.II} presents the problem statement of the constrained MHE problem, and Section \ref{sec.III} derives the explicit form of the dual problem and formulates the primal and dual learning problems. Section \ref{sec.IV} proposes the algorithm for primal-dual learning to guarantee performance. Section \ref{sec.V} provides the stability analysis. Finally, we provide numerical results in Section \ref{sec.VI}  and draw conclusions in Section \ref{sec.VII}.

\textbf{Notation:} The Euclidean norm of the vector $x$ is denoted as ${\Vert x \Vert}_2$ and $x^\mathrm{T}Ax$ is denoted as ${\Vert x\Vert}^2_{A}$. A vector $x\geq0$ means that all the elements are greater than or equal to 0. We use $\mathbb{I}_{\Omega}$ to represent the integer lies in the set $\Omega$. For example, an integer $i\in\mathbb{I}_{[a,b]}$ represents $a\leq i\leq b $. $ \lambda_{\rm max} (P_{2}, P_{1})$ is the largest generalized
eigenvalue of $P_{2}$ and $P_{1}$. $I_{m \times m}$ represents the identity matrix.
\section{Problem Formulation}\label{sec.II}
This section formulates a constrained estimation problem using the moving horizon scheme.
We consider the stochastic system with process noise and measurement noise
\begin{equation}\label{eq.sys}
\begin{aligned}
x_{t+1}=A_tx_t+\xi_t\\
y_t=C_tx_t+\zeta_t,
\end{aligned}    
\end{equation}
where $x_t \in \mathbb{R}^n$ is the state, $y_t\in \mathbb{R}^m$ is the measurement, $\xi_{t}$ is the process noise, and $\zeta_{t}$ is the measurement noise. $\{\xi_t\}$ and $\{\zeta_t\}$ are both i.i.d sequences and independent of the initial state $x_0$. We suppose both the system noise and the measurement noise obey the truncated Gaussian distribution, i.e.,
\begin{equation}
\label{eq.truncated Gaussian}
\begin{aligned}
p(\xi_t)=\left\{
\begin{aligned}
&\frac{C_\xi}{\sqrt{(2\pi)^n|Q|}}e^{-\frac{1}{2}\xi_t^{\mathrm{T}}Q^{-1}\xi_t}\;&\xi_{t}\in {\Xi}_{\xi}
\\
&0&\xi_t\notin{\Xi}_{\xi}
\end{aligned}\right.\\
p(\zeta_t)=\left\{
\begin{aligned}
&\frac{C_\zeta}{\sqrt{(2\pi)^n|R|}}e^{-\frac{1}{2}\zeta_t^{\mathrm{T}}R^{-1}\zeta_t}\;&\zeta_t\in{\Xi}_{\zeta}
\\
&0&\zeta_t\notin{\Xi}_{\zeta}.
\end{aligned}
\right.
\end{aligned}
\end{equation}
Note that $C_\xi$ and $C_\zeta$ are the constant factors to normalize the probability density function and $Q,\;R\succ0$. The reason behind \eqref{eq.truncated Gaussian} is that an optimal estimator dealing with inequality constraints can be formulated under the assumption that the probability distributions are truncated Gaussian distributions \cite{lauvernet2009truncated}. 

A natural choice for the optimal estimate $\hat{x}_{t}^*$ is the most probable state ${x}_{t}$ given the measurement sequence $y_{1:t-1}$, which is known as the maximum a posteriori Bayesian estimation:
\begin{equation}\label{eq.maximum a posteriori Bayesian estimation}
\hat{x}_{1:t}^* = \arg\max_{{x}_{1:t}}{p\left({x}_{1:t}|y_{1:t-1}\right)}.
\end{equation}
This problem can be formulated as a quadratic optimization problem when applying the logarithm trick \cite{rao2003constrained}. However, this requires all the historical measurements to obtain the estimate, which is called full information estimation. This formulation is generally computationally intractable. To make the problem tractable, we need to bound the problem size. One strategy is to employ an MHE approximation which uses the most recent measurements to perform the estimation. The constrained MHE problem can be formulated as Problem \ref{problem.constrained MHE}. 

At time $t$, MHE considers the past measurements in a window of length $M_t\in\mathbb{I}_{[1,\infty)}$ and the past optimal estimate $\hat{x}_{t-M_t}^{*}$\footnote{This choice is typically called filtering prior.}. Thereby, the MHE optimizes over the initial estimate $\hat{x}_{t-M_t|t}$ and a sequence of $M_t$ estimates of the process noise $\hat{\xi}_{\cdot|t}=\{\hat{\xi}_{j|t}\}_{j=t-M_t}^{t-1}$. Combined, they define a sequence of $M_t+1$ state estimates $\hat{x}_{\cdot|t}=\{\hat{x}_{j|t}\}_{j=t-M_t}^{t}$\footnote{From the definition, $\hat{x}^*_{t|t} = \hat{x}^*_{t}$.} and a sequence of $M_t$ estimates of the measurement noise $\hat{\zeta}_{\cdot|t}=\{\hat{\zeta}_{j|t}\}_{j=t-M_t}^{t-1}$ through \eqref{eq.Constrained MHE problem(b)}.
\begin{problem}[Constrained MHE problem]
\label{problem.constrained MHE}
\begin{subequations}
\begin{equation}\label{eq.Constrained MHE problem(a)}
\begin{aligned}
\quad \quad \quad \min_{\hat{x}_{t-M_t|t},\hat{\xi}_{\cdot|t}}&V_{\rm MHE}(\hat{x}_{\cdot|t},\hat{\xi}_{\cdot|t}, \hat{\zeta}_{\cdot|t})
\end{aligned}    
\end{equation}
\begin{equation}\label{eq.Constrained MHE problem(b)}
\begin{aligned}
\text{\rm subject to}\qquad
\hat{x}_{i+1|t}&=A_i\hat{x}_{i|t}+\hat{\xi}_{i|t},
\\\hat{\zeta}_{i|t}&={y}_{i}-C_i\hat{x}_{i|t},\\
\end{aligned}    
\end{equation}
\begin{equation}\label{eq.Constrained MHE problem(c)}
\begin{aligned}
\qquad\qquad\hat{\xi}_{i|t}\in {\Xi}_{\xi}, \hat{\zeta}_{i|t}\in {\Xi}_{\zeta},i\in\mathbb{I}_{[t-M_t,t-1]}
\end{aligned}    
\end{equation}
\end{subequations}
where
\begin{equation}\label{eq.MHE cost}
\begin{aligned}
&V_{\rm MHE}(\hat{x}_{\cdot|t},\hat{\xi}_{\cdot|t}, \hat{\zeta}_{\cdot|t})
 =\gamma^{M_t} {\Vert \hat{x}_{t-M_t|t}-\hat{x}_{t-M_t}^{*} \Vert}^2_{P_{t-M_t}^{-1}}\\
&\quad+\sum_{i=t-M_t}^{t-1}\gamma^{t-i-1}{\Vert \hat{\xi}_{i|t} \Vert}^2_{Q^{-1}}+\sum_{i=t-M_t}^{t-1}\gamma^{t-i-1}{\Vert \hat{\zeta}_{i|t} \Vert}^2_{R^{-1}},
\end{aligned}
\end{equation}
where $\gamma\in[0,1)$ is the time-discounting factor which has
been previously suggested in \cite{knufer2018robust} to obtain stronger robustness bounds. ${\Vert \hat{x}_{t-M_t|t}-\hat{x}_{t-M_t}^{*} \Vert}^2_{P_{t-M_t}^{-1}}$ is the arrival cost which serves as an equivalent statistic
by penalizing the deviation of $\hat{x}_{t-M_t|t}$ away from $\hat{x}_{t-M_t}^*$. Besides, $P_{t-M_t}\succ0$ is the weighted matrix.
\end{problem}
\begin{remark} Generally, it is hard to obtain the analytic form of the arrival cost. Notable exceptions are unconstrained linear  problems, where $P_{t-M_t}$ can be obtained by solving the matrix Riccati equation:
\begin{equation}
\nonumber
\begin{aligned}
P_{i+1} = Q + A_iP_iA_i^{\mathrm{T}} - A_iP_iC_i^{\mathrm{T}}(R+C_iP_iC_i^{\mathrm{T}})^{-1}C_iP_iA_i^{\mathrm{T}}.
\end{aligned}    
\end{equation}
This results in a recurrent way to obtain the state estimation, which is equivalent to the well-known Kalman filter \cite{KalmanFilter,rao2000moving,rao2003constrained} when horizon length $M_t=1$.
\end{remark}
\begin{remark}
In this paper, we consider the prediction form of the estimation problem to simplify the notation. However, all the results can be directly extended to the filtering form.
\end{remark}

\begin{assumption}\label{assumption.convex set}
Both $\Xi_{\xi}$ and $\Xi_{\zeta}$ are convex sets.
\end{assumption}
\begin{proposition}
Problem \ref{problem.constrained MHE} is a convex optimization problem.
\begin{proof}
The objective function is quadratic and thus convex. The feasible region is also convex because the affine function can preserve convexity.
\end{proof}
\end{proposition}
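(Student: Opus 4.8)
The plan is to verify the two defining properties of a convex program: that the cost \eqref{eq.MHE cost} is convex in the decision variables, and that the feasible set cut out by \eqref{eq.Constrained MHE problem(b)}--\eqref{eq.Constrained MHE problem(c)} is convex. The genuine decision variables are the initial estimate $\hat{x}_{t-M_t|t}$ and the process-noise estimates $\hat{\xi}_{\cdot|t}$; the states $\hat{x}_{\cdot|t}$ and the measurement-noise estimates $\hat{\zeta}_{\cdot|t}$ are pinned down by \eqref{eq.Constrained MHE problem(b)}. So the first step I would take is to eliminate these dependent quantities: unrolling $\hat{x}_{i+1|t}=A_i\hat{x}_{i|t}+\hat{\xi}_{i|t}$ writes every $\hat{x}_{i|t}$, $i\in\mathbb{I}_{[t-M_t,t]}$, as an affine function of $(\hat{x}_{t-M_t|t},\hat{\xi}_{\cdot|t})$ through the state-transition matrix, and then $\hat{\zeta}_{i|t}=y_i-C_i\hat{x}_{i|t}$ is affine in the decision variables as well.

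Given that substitution, convexity of the cost follows term by term. The arrival cost is a quadratic form in $\hat{x}_{t-M_t|t}$ whose Hessian $2\gamma^{M_t}P_{t-M_t}^{-1}$ is positive semidefinite since $P_{t-M_t}\succ0$; the terms $\gamma^{t-i-1}\|\hat{\xi}_{i|t}\|^2_{Q^{-1}}$ are convex for the same reason using $Q\succ0$; and each $\gamma^{t-i-1}\|\hat{\zeta}_{i|t}\|^2_{R^{-1}}$ is the composition of the affine map $(\hat{x}_{t-M_t|t},\hat{\xi}_{\cdot|t})\mapsto\hat{\zeta}_{i|t}$ with the convex quadratic $z\mapsto\|z\|^2_{R^{-1}}$ ($R\succ0$), hence convex. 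Since $\gamma\in[0,1)$ makes all the coefficients $\gamma^{M_t}$ and $\gamma^{t-i-1}$ nonnegative, $V_{\rm MHE}$ is a nonnegative combination of convex functions and is therefore convex. For the feasible set, the equality constraints \eqref{eq.Constrained MHE problem(b)} are affine and describe an affine subspace; the constraint $\hat{\xi}_{i|t}\in\Xi_\xi$ is convex directly in a decision variable by Assumption \ref{assumption.convex set}, while $\hat{\zeta}_{i|t}\in\Xi_\zeta$ is the preimage of a convex set under an affine map, hence convex. The feasible region, being an intersection of convex sets, is convex, so Problem \ref{problem.constrained MHE} minimizes a convex function over a convex set.

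I do not anticipate a real obstacle; the only point requiring care is the bookkeeping that $\hat{x}_{\cdot|t}$ and $\hat{\zeta}_{\cdot|t}$ are affine images of the true decision variables, so that the ``affine precomposition preserves convexity'' rule legitimately applies both to the measurement-residual terms in the cost and to the constraint $\hat{\zeta}_{i|t}\in\Xi_\zeta$ --- treating $\hat{\zeta}_{\cdot|t}$ as if it were an independent variable would sidestep exactly this subtlety.
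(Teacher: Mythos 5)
Your proposal is correct and follows essentially the same route as the paper's proof: convexity of the quadratic cost from the positive definiteness of $P_{t-M_t}$, $Q$, $R$, and convexity of the feasible set via Assumption \ref{assumption.convex set} together with the fact that affine maps preserve convexity. Your version simply makes explicit the bookkeeping (eliminating $\hat{x}_{\cdot|t}$ and $\hat{\zeta}_{\cdot|t}$ as affine images of the true decision variables) that the paper's one-line argument leaves implicit.
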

\begin{assumption}
We assume that Problem \ref{problem.constrained MHE} is well-posed, i.e., a solution exists to Problem \ref{problem.constrained MHE} for $t\in\mathbb{I}_{[0,\infty)}$. The sufficient conditions for the existence of solutions 
are well studied in \cite{rao2003constrained}.
\end{assumption}


 
\section{Dual Problem and Estimator Approximation }\label{sec.III}
In this section, we first review some important conclusions about duality theory \cite{boyd2004convex} and then derive the dual problem of Problem \ref{problem.constrained MHE}. Finally, the supervised learning scheme is used to approximate the primal and dual estimators. 
\subsection{Duality theory}
Duality is often used in optimization to certify optimality of
a given solution. Consider the primal optimization problem:
\begin{equation}
\label{eq.primal optimization problem}
\begin{aligned}
\mathcal{P}:\;p^*=&\min F_0(x)\\
\text{subject to}\;\;F_i(x)&\leq0,\;\;i\in\mathbb{I}_{[0,p]} \\
H_i(x)&=0,\;\;i\in\mathbb{I}_{[0,q]}.
\end{aligned}
\end{equation}
The Lagrange function is defined as
\begin{equation}
L(x,v,\mu)=F_0(x)+\sum_{i=0}^{p}{v_iF_i(x)}+\sum_{i=0}^{q}{\mu_iH_i(x)}.
\end{equation}
Then, the corresponding Lagrange dual function is given by
\begin{equation}
g(v,\mu)=\inf_x L(x,v,\mu). 
\end{equation}
The Lagrange dual function gives us a lower bound on the optimal value $p^*$ of the primal problem \eqref{eq.primal optimization problem}. The calculation of the best lower bound leads to the Lagrange dual problem:
\begin{equation}
\label{eq.dual problem}
\begin{aligned}
\mathcal{D}:\;&d^*=\max g(v,\mu)\\
\text{subject to}\;\;&v\geq0.
\end{aligned}    
\end{equation}
The Lagrange dual problem $\mathcal{D}$ is a convex optimization problem regardless of whether the primal problem $\mathcal{P}$ is convex. It is well-known that $d^* \leq p^*$ always holds thanks to the weak duality theory and we refer to the difference $p^*-d^*$ as the duality gap.
\subsection{Dual problem of Constrained MHE}
From \eqref{eq.dual problem}, we establish the explicit form of the dual problem of Problem \ref{problem.constrained MHE}, which is given in Problem \ref{probelm.dual problem}. We defer detailed derivations to Appendix \ref{appendix.A}.
\begin{problem}[Duality of Problem \ref{problem.constrained MHE}]
\label{probelm.dual problem}
The dual problem of Problem \ref{problem.constrained MHE} is
\begin{subequations}
\begin{align}
&\qquad\begin{aligned}
\label{eq.Duality of Problem of MHE(a)}
\max_{\mu}\quad G(\lambda,\mu)
\end{aligned}\\
&\begin{aligned}
\label{eq.Duality of Problem of MHE(b)}
\text{\rm subject to}\;\;
&\lambda_{i-1}-A_i^\mathrm{T}\lambda_i-C_i^{\mathrm{T}}\mu_i=0\\
&\lambda_{t-1}=0, \; i\in\mathbb{I}_{[t-M_t+1,t-1]},
\end{aligned}    
\end{align}
\end{subequations}
where $\lambda_i$ and $\mu_i$ are Lagrange multipliers. Besides, $G(\lambda,\mu)$ is defined as
\begin{equation}\label{eq.Lagrange dual function}
\begin{aligned}
&G(\lambda,\mu):=-\frac{1}{4\gamma^{M_t}}{\Vert A_{t-M_t}^\mathrm{T}\lambda_{t-M_t}+C_{t-M_t}^\mathrm{T}\mu_{t-M_t} \Vert}^2_{P_{t-M_t}}\\
&+\sum_{i=t-M_t}^{t-1}\gamma^{t-i-1}{\Vert \Pi_{\Tilde{\Xi}_{\xi}}\left(\gamma^{i+1-t}Q^{1/2}\lambda_i\right) \Vert}^2_{2}\\
&+\sum_{i=t-M_t}^{t-1}\gamma^{t-i-1}{\Vert \Pi_{\Tilde{\Xi}_{\zeta}}\left(\gamma^{i+1-t}R^{1/2}\mu_i\right) \Vert}^2_{2}\\
&-\lambda_{t-M_t}^\mathrm{T}A_{t-M_t}\hat{x}_{t-M_t}^{*}-\mu_{t-M_t}^\mathrm{T}C_{t-M_t}\hat{x}_{t-M_t}^{*}+\sum_{i=t-M_t}^{t-1}\mu_{i}^{\mathrm{T}}y_{i}\\
&-\sum_{i=t-M_t}^{t-1}\lambda_{i}^{\mathrm{T}}Q^{1/2}\Pi_{\Tilde{\Xi}_{\xi}}\left(\gamma^{i+1-t}Q^{1/2}\lambda_i\right)\\
&-\sum_{i=t-M_t}^{t-1}\mu_{i}^{\mathrm{T}}R^{1/2}\Pi_{\Tilde{\Xi}_{\zeta}}\left(\gamma^{i+1-t}R^{1/2}\mu_i\right),
\end{aligned}
\end{equation}
$\Tilde{\Xi}_{\xi}$ and $\Tilde{\Xi}_{\zeta}$ are defined as
\begin{equation}\label{eq.definition of special sets}
\begin{aligned}
\Tilde{\Xi}_{\xi}:&=\left\{Q^{-1/2}x:x\in{\Xi}_{\xi}\right\}\\
\Tilde{\Xi}_{\zeta}:&=\left\{R^{-1/2}x:x\in{\Xi}_{\zeta}\right\},
\end{aligned}
\end{equation}
and $\Pi_{\Tilde{\Xi}_{\xi}}(\cdot)$ and $\Pi_{\Tilde{\Xi}_{\zeta}}(\cdot)$ are denoted as the minimum distance Euclidean projection onto the sets, i.e.,
\begin{equation}\label{eq.definition of projection}
\begin{aligned}
&\Pi_{\Tilde{\Xi}_{\xi}}(\cdot):\mathbb{R}^n\xrightarrow{}\mathbb{R}^n,\;
\Pi_{\Tilde{\Xi}_{\xi}}(z):=\arg\min_{x\in\Tilde{\Xi}_{\xi}}\left\{{\Vert x-\frac{1}{2}z \Vert}^2_2\right\}\\
&\Pi_{\Tilde{\Xi}_{\zeta}}(\cdot):\mathbb{R}^m\xrightarrow{}\mathbb{R}^m,\;
\Pi_{\Tilde{\Xi}_{\zeta}}(z):=\arg\min_{x\in\Tilde{\Xi}_{\zeta}}\left\{{\Vert x-\frac{1}{2}z \Vert}^2_2\right\}.
\end{aligned}
\end{equation}
\end{problem}
\subsection{Primal and Dual Learning Problems}
Given the formulation of Problems \ref{problem.constrained MHE} and \ref{probelm.dual problem}, we are now ready to train primal and dual estimators using supervised learning tools. 
We define all the information used to train the estimators as
\begin{equation}
\mathcal{I}_t:=\{y_{t-M_t:t-1},A_{t-M_t:t-1},C_{t-M_t:t-1},P_{t-M_t},\hat{x}_{t-M_t}^{*}\}.
\end{equation}
We observe that both the optimal estimator $\hat{X}^*(\mathcal{I}_t)$ (the optimal solution of Problem \ref{problem.constrained MHE}, i.e., $\hat{X}^*(\mathcal{I}_t)=\{\hat{x}^*_{t-M_t|t},\hat{\xi}^*_{\cdot|t}\}$) and the optimal dual estimator $\Lambda^*(\mathcal{I}_t)$ (the optimal solution of Problem \ref{probelm.dual problem}, i.e., $\Lambda^*(\mathcal{I}_t)=\{\mu^*\}$) are determined by $\mathcal{I}_t$.

Suppose the primal and dual estimators are parameterized by approximate functions $\hat{X}(\mathcal{I};\eta)$ and $\Lambda(\mathcal{I};\omega)$, respectively, where $\eta$ and $\omega$ are function parameters. Then the primal learning problem is given by
\begin{equation}\label{eq.primal learning}
\eta^*=\arg\min_{\eta}\sum_{i=1}^{N}{\mathcal{L}\left(\hat{X}(\mathcal{I}^i;\eta),\hat{X}^*(\mathcal{I}^i)\right) }.
\end{equation}
Similarly, the parameters of the dual estimator can be optimized by
\begin{equation}\label{eq.dual learning}
\omega^*=\arg\min_{\omega}\sum_{i=1}^{N}{\mathcal{L}\left(\Lambda(\mathcal{I}^i;\omega),\Lambda^*(\mathcal{I}^i)\right)}.
\end{equation}
Note that $\{\mathcal{I}^i, \hat{X}^*(\mathcal{I}^i), \Lambda^*(\mathcal{I}^i)\}$ represents the $i$-th sample, $N$ denotes the sample size and $\mathcal{L}$ is the loss function which can be chosen as different formulations such as the $L_2$ loss function.

\section{Primal-dual Estimator Learning}\label{sec.IV}

In this section, we show how the
parameterized estimators solved by \eqref{eq.primal learning} and \eqref{eq.dual learning} can be used to efficiently ensure the
feasibility and near-optimality of the estimator during runtime, inspired by \cite{zhang2019safe,zhang2020near}.

\subsection{Offline Training Performance Guarantees}

Given \eqref{eq.primal learning} and \eqref{eq.dual learning}, one natural question is how to verify the feasibility and near-optimality of the learned parameterized estimators. To answer this question, we first review a useful lemma in the field of statistical learning theory.
\begin{lemma}[Smallest Sample Size for Reliable Performance \cite{tempo1996probabilistic}]
\label{lemma.Smallest Sample Size for Reliable Performance}
Suppose $q$ is a random vector. Let $\{q^1,q^2,...,q^N\}$ represents N i.i.d. samples. Then $\hat{u}_N=\max_{i=1,2,..,N} u(q^i)$ represents the estimate of the worst-case performance function $u_{\rm max}:=\max_{q \in \mathcal{Q}}u(q)$, where $\mathcal{Q}$ denotes the sample space. The smallest sample size that guarantees
\begin{equation}
Prob\{u(q)>\hat{u}_N\}\leq \epsilon
\end{equation} with confidence at least $1-\beta$ is given by
\begin{equation}\label{eq.smallest sample size}
N\geq\frac{\ln{\frac{1}{\beta}}}{\ln{\frac{1}{1-\epsilon}}}.
\end{equation}
\end{lemma}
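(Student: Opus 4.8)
\emph{Proof plan.} The plan is to recast the confidence statement as a tail bound over the $N$ samples and then reduce it to a one-dimensional fact about the law of $u(q)$. Introduce the complementary CDF $\bar{F}(t):=\mathrm{Prob}\{u(q)>t\}=1-\mathrm{Prob}\{u(q)\le t\}$, which is non-increasing and right-continuous. Conditioning on $q^1,\dots,q^N$, the threshold $\hat{u}_N$ is deterministic, so a fresh independent draw satisfies $\mathrm{Prob}\{u(q)>\hat{u}_N \mid q^1,\dots,q^N\}=\bar{F}(\hat{u}_N)$; hence the ``failure'' event $\{\mathrm{Prob}\{u(q)>\hat{u}_N\}>\epsilon\}$ equals $\{\bar{F}(\hat{u}_N)>\epsilon\}$, and it suffices to show this event has probability at most $\beta$ whenever $N\ge \ln(1/\beta)/\ln(1/(1-\epsilon))$.

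The first step I would take exploits monotonicity: since $\bar{F}$ is non-increasing and $\hat{u}_N=\max_{i}u(q^i)$, we have $\bar{F}(\hat{u}_N)=\min_i\bar{F}(u(q^i))$. Therefore $\{\bar{F}(\hat{u}_N)>\epsilon\}=\bigcap_{i=1}^N\{\bar{F}(u(q^i))>\epsilon\}$ is an intersection of independent events, each of probability $\mathrm{Prob}\{\bar{F}(u(q))>\epsilon\}$, so $\mathrm{Prob}\{\bar{F}(\hat{u}_N)>\epsilon\}=\big(\mathrm{Prob}\{\bar{F}(u(q))>\epsilon\}\big)^N$.

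The key step -- and the one I expect to require the most care -- is the distribution-free estimate $\mathrm{Prob}\{\bar{F}(u(q))>\epsilon\}\le 1-\epsilon$. To prove it, set $t^\star:=\sup\{s:\bar{F}(s)>\epsilon\}$ (taking $t^\star=-\infty$ if this set is empty, in which case the probability is $0$ and we are done). Because $\bar{F}$ is non-increasing, every $s<t^\star$ satisfies $\bar{F}(s)>\epsilon$; because $\bar{F}$ is right-continuous, $\bar{F}(t^\star)\le\epsilon$ (otherwise the supremum could be exceeded). Hence $\{\bar{F}(u(q))>\epsilon\}=\{u(q)<t^\star\}$, and $\mathrm{Prob}\{u(q)<t^\star\}=1-\lim_{s\uparrow t^\star}\bar{F}(s)\le 1-\epsilon$ since $\bar{F}(s)>\epsilon$ for all $s<t^\star$. (When $u(q)$ has a non-atomic law this is just the familiar fact that $\bar{F}(u(q))$ is uniform on $[0,1]$; the argument above additionally handles atoms, and the only delicate point is tracking the one-sided limits so the inequality is not lost at $t^\star$.)

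Combining the two displays yields $\mathrm{Prob}\{\bar{F}(\hat{u}_N)>\epsilon\}\le(1-\epsilon)^N$. Imposing $(1-\epsilon)^N\le\beta$ and taking logarithms -- dividing by $\ln(1-\epsilon)<0$, which reverses the inequality -- gives $N\ge \ln\beta/\ln(1-\epsilon)=\ln(1/\beta)/\ln(1/(1-\epsilon))$, which is exactly \eqref{eq.smallest sample size}, completing the argument.
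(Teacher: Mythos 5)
Your argument is correct, but note that the paper itself offers no proof of this lemma: it is quoted verbatim from the cited reference \cite{tempo1996probabilistic}, so there is no in-paper proof to compare against. Your derivation is essentially the standard one from that reference: reduce the confidence statement to $\mathrm{Prob}\{\bar{F}(\hat{u}_N)>\epsilon\}=\big(\mathrm{Prob}\{\bar{F}(u(q))>\epsilon\}\big)^N\le(1-\epsilon)^N$ via the distribution-free bound $\mathrm{Prob}\{\bar{F}(u(q))>\epsilon\}\le 1-\epsilon$ (the probability integral transform, with your careful one-sided-limit handling of atoms), and then solve $(1-\epsilon)^N\le\beta$ for $N$. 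The only point you leave unaddressed is the word ``smallest'' in the statement: you establish sufficiency of the bound on $N$, but not that no smaller $N$ works, which would require exhibiting a distribution (e.g., $u(q)$ with a continuous law, where $\mathrm{Prob}\{\bar{F}(\hat{u}_N)>\epsilon\}=(1-\epsilon)^N$ exactly) attaining the bound; since the paper only ever invokes the sufficiency direction, this omission is harmless here.
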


This lemma provides a powerful tool to test the performance of the primal and dual estimators using the collected finite samples. Specifically, given a desired
maximum suboptimality level, we can use this lemma to verify that
the approximated estimator satisfies this suboptimality
level with high probability. This comes with the following Theorem \ref{theorem.Offline Training With Performance Guarantee}.

\begin{theorem}[Offline Training With Performance Guarantee]\label{theorem.Offline Training With Performance Guarantee}
Suppose we have $N_p$ samples $\{\mathcal{I}^{i}, \hat{X}^*(\mathcal{I}^i)\}$ for primal estimator learning and $N_d$ samples $\{\mathcal{I}^{i}, \Lambda^*(\mathcal{I}^i)\}$ for dual estimator learning, where $N_p\geq \frac{\ln{\frac{1}{\beta_p}}}{\ln{\frac{1}{1-\epsilon_p}}}$ and $N_d\geq \frac{\ln{\frac{1}{\beta_d}}}{\ln{\frac{1}{1-\epsilon_d}}}$. Let $\epsilon_p, \epsilon_d \in [0,1)$ be admissible primal and dual violation probabilities, and let $0<\beta_p,\beta_d\ll1$ be desired confidence levels. The desired suboptimality level of the learned primal and dual estimators are denoted as $\Delta_p$ and $\Delta_d$, respectively. If\footnote{To simplify the notation, we use $V_{\rm MHE}(\hat{X}(\mathcal{I}; \eta))$ 
to represent the MHE cost. We imply that other variables in \eqref{eq.MHE cost} such as $\hat{\zeta}$ can be defined by \eqref{eq.Constrained MHE problem(b)} with no effort once $\hat{X}(\mathcal{I}; \eta)$ is obtained. Similar simplification suits for $V_{\rm MHE}(\hat{X}(\mathcal{I})), \; G(\Lambda(\mathcal{I}))$, and $G(\Lambda(\mathcal{I};\omega))$. 
}
\begin{equation}\label{eq.primal sample condition}
\begin{aligned}
V_{\rm MHE}(\hat{X}(\mathcal{I}^i; \eta^*))\leq V_{\rm MHE}(\hat{X}^*(\mathcal{I}^i)) + \Delta_p,\\
\hat{X}(\mathcal{I}^i; \eta^*)\;\text{satisfies} \;\eqref{eq.Constrained MHE problem(c)},\;\;i\in\mathbb{I}_{[1,N_p]}
\end{aligned}
\end{equation}
holds, then with confidence at least $1-\beta_p$ the following inequality holds
\begin{equation}
\begin{aligned}
Prob\Big\{V_{\rm MHE}(\hat{X}(\mathcal{I}; \eta^*))\leq V_{\rm MHE}(\hat{X}^*(\mathcal{I})) + \Delta_p,\\ 
\hat{X}(\mathcal{I}; 
\eta^*)\;\text{satisfies}\;\eqref{eq.Constrained MHE problem(c)} \Big\}\geq1-\epsilon_p.
\end{aligned}
\end{equation}
Similarly, if
\begin{equation}\label{eq.dual sample condition}
\begin{aligned}
G(\Lambda(\mathcal{I}^i;\omega^*)) \geq G(\Lambda^*(\mathcal{I}^i)) - \Delta_d,\;\;i\in\mathbb{I}_{[1,N_d]}
\end{aligned}
\end{equation}
holds, then with confidence at least $1-\beta_d$ the following inequality holds
\begin{equation}
\begin{aligned}
Prob\Big\{G(\Lambda(\mathcal{I};\omega^*)) \geq G(\Lambda^*(\mathcal{I})) - \Delta_d \Big\}\geq1-\epsilon_d.
\end{aligned}
\end{equation}
\end{theorem}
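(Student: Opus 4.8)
The plan is to obtain both halves of the statement as immediate corollaries of Lemma~\ref{lemma.Smallest Sample Size for Reliable Performance}, by attaching to each trained estimator a binary ``violation indicator'' and letting that indicator play the role of the performance function $u(\cdot)$. First I would fix the learned parameters $\eta^*$ produced by \eqref{eq.primal learning} and view the verification data $\{\mathcal{I}^i\}_{i=1}^{N_p}$ as i.i.d.\ realizations of the random vector $\mathcal{I}$, whose law is induced by the dynamics \eqref{eq.sys}, the initial state, and the i.i.d.\ truncated-Gaussian noises \eqref{eq.truncated Gaussian} (so $\mathcal{I}_t$ is a measurable function of those primitives and is itself a bona fide random vector with some sample space $\mathcal{Q}$). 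Define $u_p(\mathcal{I})$ to equal $0$ whenever $\hat{X}(\mathcal{I};\eta^*)$ satisfies \eqref{eq.Constrained MHE problem(c)} \emph{and} $V_{\rm MHE}(\hat{X}(\mathcal{I};\eta^*)) \le V_{\rm MHE}(\hat{X}^*(\mathcal{I})) + \Delta_p$, and to equal $1$ otherwise. The event appearing in the conclusion is then exactly $\{u_p(\mathcal{I}) = 0\}$.

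Next I would read off $\hat{u}_{N_p}$ from the hypothesis. Condition \eqref{eq.primal sample condition} says precisely that $u_p(\mathcal{I}^i) = 0$ for every $i \in \mathbb{I}_{[1,N_p]}$, so the empirical worst case satisfies $\hat{u}_{N_p} = \max_{i} u_p(\mathcal{I}^i) = 0$. Since $N_p \ge \ln(1/\beta_p)/\ln(1/(1-\epsilon_p))$, Lemma~\ref{lemma.Smallest Sample Size for Reliable Performance} applied with $u = u_p$, $\epsilon = \epsilon_p$, $\beta = \beta_p$ gives $\mathrm{Prob}\{u_p(\mathcal{I}) > \hat{u}_{N_p}\} \le \epsilon_p$ with confidence at least $1-\beta_p$. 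Because $u_p$ takes only the values $0$ and $1$ and $\hat{u}_{N_p} = 0$, the event $\{u_p(\mathcal{I}) > \hat{u}_{N_p}\}$ coincides with $\{u_p(\mathcal{I}) = 1\}$, i.e.\ with the event that $\hat{X}(\mathcal{I};\eta^*)$ is infeasible for \eqref{eq.Constrained MHE problem(c)} or exceeds the tolerated suboptimality $\Delta_p$. Taking complements turns this into $\mathrm{Prob}\{\,V_{\rm MHE}(\hat{X}(\mathcal{I};\eta^*)) \le V_{\rm MHE}(\hat{X}^*(\mathcal{I})) + \Delta_p,\ \hat{X}(\mathcal{I};\eta^*)\ \text{satisfies}\ \eqref{eq.Constrained MHE problem(c)}\,\} \ge 1-\epsilon_p$ with confidence at least $1-\beta_p$, which is the first claim. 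The dual claim is proved identically: fix $\omega^*$, set $u_d(\mathcal{I}) = 0$ when $G(\Lambda(\mathcal{I};\omega^*)) \ge G(\Lambda^*(\mathcal{I})) - \Delta_d$ and $1$ otherwise (using that the dual parameterization $\Lambda(\cdot;\omega)$ is built to respect the equality constraints \eqref{eq.Duality of Problem of MHE(b)}, e.g.\ by propagating $\lambda$ backward from $\lambda_{t-1} = 0$, so dual feasibility is automatic and only the optimality gap must be certified); hypothesis \eqref{eq.dual sample condition} forces $\hat{u}_{N_d} = 0$, and Lemma~\ref{lemma.Smallest Sample Size for Reliable Performance} with $N_d \ge \ln(1/\beta_d)/\ln(1/(1-\epsilon_d))$ then yields $\mathrm{Prob}\{u_d(\mathcal{I}) > 0\} \le \epsilon_d$ with confidence at least $1-\beta_d$, and complementation gives the stated bound.

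The mathematical content is light, so the work is concentrated in two places. First, one must justify that $\mathcal{I}$ is genuinely a random vector with a fixed distribution, so that the i.i.d.\ premise of Lemma~\ref{lemma.Smallest Sample Size for Reliable Performance} is legitimate; this is exactly where the noise model \eqref{eq.truncated Gaussian} and the independence structure of \eqref{eq.sys} are used. Second, and more delicate, is the independence between the verification samples and the performance functions $u_p, u_d$: since $u_p$ depends on $\eta^*$, which in turn was trained from data via \eqref{eq.primal learning}, a clean application of the lemma requires the verification batch to be drawn independently of the training data (or, at minimum, the learned parameters to be treated as frozen before the randomized check is run); I would make this assumption explicit. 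The only other thing to note is that the strict inequality $u(q) > \hat{u}_N$ in the lemma versus the non-strict guarantee in the theorem causes no loss, precisely because $u_p$ and $u_d$ are $\{0,1\}$-valued with vanishing empirical maxima, so $\{u > \hat{u}_N\}$ is literally the violation event. I expect the independence point to be the only genuine obstacle; the rest is bookkeeping.
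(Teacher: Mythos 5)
Your proof is correct and follows essentially the same route as the paper's: both reduce the claim to Lemma~\ref{lemma.Smallest Sample Size for Reliable Performance} applied to a scalar performance function that the verification hypothesis forces to have empirical maximum zero, the only difference being that the paper uses the real-valued function $u(\mathcal{I})=\max\{\max_j \mathcal{C}_j(\hat X(\mathcal{I})),\,V_{\rm MHE}(\hat X(\mathcal{I}))-V_{\rm MHE}(\hat X^*(\mathcal{I}))-\Delta_p\}$ where you use a $\{0,1\}$ violation indicator. Your extra remarks on the i.i.d.\ status of $\mathcal{I}$ and on freezing $\eta^*$ before drawing the verification batch address genuine points the paper leaves implicit, and your binary choice cleanly sidesteps the strict-versus-nonstrict inequality bookkeeping.
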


See Appendix \ref{appendix.B} for detailed proofs. Although we choose the confidence levels $\beta_p$ and $\beta_d$ as a small number ($<10^{-6}$), the minimum sample size would not explode due to the logarithm operator in \eqref{eq.smallest sample size}. Generally, given the required $\epsilon_{p/d}, \Delta_{p/d}$ and $\beta_{p/d}$, Theorem \ref{theorem.Offline Training With Performance Guarantee} provides an effective way to determine whether the learned estimator needs to be retrained.
\subsection{Online Application Performance Guarantees}
Although we have established probabilistic guarantees for the near-optimality of the learned estimator, there still remain some extreme cases where we may get a poor state estimate. To avoid such cases, we use the weak duality property to examine the learned estimator in real-time.

\begin{theorem}[Online Application With Performance Guarantee]\label{theorem.Online Application With Performance Guarantee} Assume $\hat{X}(\mathcal{I}; \eta^*)$ satisfies \eqref{eq.Constrained MHE problem(c)}, then
\begin{equation}\label{eq.Online Application With Performance Guarantee}
\begin{aligned}
V_{\rm MHE}(\hat{X}(\mathcal{I}; \eta^*))&-V_{\rm MHE}(\hat{X}^*(\mathcal{I}))\leq \\
&V_{\rm MHE}(\hat{X}(\mathcal{I}; \eta^*))-G(\Lambda(\mathcal{I};\omega^*)).
\end{aligned}
\end{equation}
\end{theorem}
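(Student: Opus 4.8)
The plan is to show that the claimed inequality is nothing more than weak duality applied to the learned dual iterate. Cancelling the common term $V_{\rm MHE}(\hat{X}(\mathcal{I};\eta^*))$ from both sides of \eqref{eq.Online Application With Performance Guarantee}, the statement is equivalent to $G(\Lambda(\mathcal{I};\omega^*))\le V_{\rm MHE}(\hat{X}^*(\mathcal{I}))$. Since $\hat{X}^*(\mathcal{I})$ solves Problem~\ref{problem.constrained MHE}, its cost equals the primal optimal value $p^*$ of \eqref{eq.primal optimization problem} with $F_0$, $F_i$, $H_i$ instantiated as the objective \eqref{eq.MHE cost} and the constraints \eqref{eq.Constrained MHE problem(b)}--\eqref{eq.Constrained MHE problem(c)}. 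Hence it suffices to prove $G(\Lambda(\mathcal{I};\omega^*))\le p^*$.

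First I would verify that the iterate produced by the dual estimator is feasible for Problem~\ref{probelm.dual problem}. The estimator returns the multipliers $\mu=\Lambda(\mathcal{I};\omega^*)$, and the companion multipliers $\lambda$ are recovered by the backward recursion $\lambda_{t-1}=0$, $\lambda_{i-1}=A_i^{\mathrm T}\lambda_i+C_i^{\mathrm T}\mu_i$ for $i\in\mathbb{I}_{[t-M_t+1,t-1]}$, which is exactly the equality constraint \eqref{eq.Duality of Problem of MHE(b)}; since Problem~\ref{probelm.dual problem} imposes no further (e.g.\ sign) restriction on the multipliers---the disturbance-set constraints \eqref{eq.Constrained MHE problem(c)} having been absorbed into the Euclidean-projection terms of $G$ in \eqref{eq.Lagrange dual function}, as derived in Appendix~\ref{appendix.A}---the pair $(\lambda,\mu)$ is a feasible point of the Lagrange dual $\mathcal{D}$ in \eqref{eq.dual problem}. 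Consequently $G(\Lambda(\mathcal{I};\omega^*))\le d^*$, the dual optimal value, because $\mathcal{D}$ maximizes $G$ over its feasible set.

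Next I would invoke weak duality, recalled in the duality review of Section~\ref{sec.III}: $d^*\le p^*$ holds unconditionally, so $G(\Lambda(\mathcal{I};\omega^*))\le d^*\le p^*=V_{\rm MHE}(\hat{X}^*(\mathcal{I}))$, which is the reduced inequality; re-adding $V_{\rm MHE}(\hat{X}(\mathcal{I};\eta^*))$ to both sides recovers \eqref{eq.Online Application With Performance Guarantee}. Finally I would record the role of the hypothesis: because $\hat{X}(\mathcal{I};\eta^*)$ satisfies \eqref{eq.Constrained MHE problem(c)}---and the equality constraints \eqref{eq.Constrained MHE problem(b)} hold by the footnote convention that defines $\hat{\zeta}$ (and the state sequence) from $\hat{X}(\mathcal{I};\eta^*)$---it is a feasible point of Problem~\ref{problem.constrained MHE}, so $V_{\rm MHE}(\hat{X}(\mathcal{I};\eta^*))\ge p^*$; thus both sides of \eqref{eq.Online Application With Performance Guarantee} are nonnegative, and the right-hand side, which involves only quantities available online, is a certified upper bound on the true suboptimality gap $V_{\rm MHE}(\hat{X}(\mathcal{I};\eta^*))-V_{\rm MHE}(\hat{X}^*(\mathcal{I}))$.

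The argument has no genuine analytical obstacle; the one point that deserves care is the claim that the $G$ and the constraint set written in Problem~\ref{probelm.dual problem} indeed constitute the Lagrange dual of Problem~\ref{problem.constrained MHE}---which is precisely what Appendix~\ref{appendix.A} establishes---so that the textbook inequality $d^*\le p^*$ is legitimately applicable. In particular one must check that rewriting $g(v,\mu)=\inf_x L(x,v,\mu)$ in terms of the minimum-distance projections $\Pi_{\Tilde{\Xi}_{\xi}}$, $\Pi_{\Tilde{\Xi}_{\zeta}}$ drops no constraint on the multipliers, which is the only place where the specific structure of the truncated-Gaussian/constrained MHE problem enters.
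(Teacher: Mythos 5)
Your proposal is correct and follows essentially the same route as the paper: the paper's proof is exactly the chain $G(\Lambda(\mathcal{I};\omega^*)) \leq G(\Lambda^*(\mathcal{I})) \leq V_{\rm MHE}(\hat{X}^*(\mathcal{I}))$, i.e.\ your $G(\Lambda(\mathcal{I};\omega^*)) \le d^* \le p^*$, obtained from dual feasibility of the learned multipliers and weak duality. You simply spell out the feasibility of the learned dual iterate and the reduction to the dual optimal value more explicitly than the paper's one-line argument.
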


\begin{proof}
This can be easily verified by weak duality $G(\Lambda(\mathcal{I};\omega^*)) \leq G(\Lambda^*(\mathcal{I})) \leq V_{\rm MHE}(\hat{X}^*(\mathcal{I}))$.
\end{proof}
We use Theorem \ref{theorem.Online Application With Performance Guarantee} in our framework as follows: Let
$\Delta$ be the desired maximum suboptimality level. During the online application process, for a 
given parameter $\eta^*$, if the right hand side of \eqref{eq.Online Application With Performance Guarantee} is smaller than $\Delta$, the performance gap between the learned primal estimator $\hat{X}(\mathcal{I}; \eta^*)$ and the optimal estimator $\hat{X}^*(\mathcal{I})$ can be bounded by $\Delta$. So we call this property as ``$\Delta$- suboptimality". If the learned primal estimator $\hat{X}(\mathcal{I}; \eta^*)$ is guaranteed to be at most $\Delta$-suboptimal, its output would be applied in real-time. However, if the right hand side of \eqref{eq.Online Application With Performance Guarantee} is larger
than the predetermined suboptimality level $\Delta$, then a backup estimator (such as an online MHE method) will be used to provide the state estimate at this instant.
The following corollary bounds the failure probability of the online application.
\begin{corollary}[Violation Probability]
Suppose $\Delta:=\Delta_p+\Delta_d+\Delta_{\rm gap}$, where $\Delta_{\rm gap}$ represents the maximum duality gap, i.e., $\Delta_{gap}=\max_{\mathcal{I}}\big\{{V_{\rm MHE}(\hat{X}^*(\mathcal{I}))-G(\Lambda^*(\mathcal{I}))}\big\}$. Under the assumptions in Theorem \ref{theorem.Offline Training With Performance Guarantee},  if \eqref{eq.primal sample condition} and \eqref{eq.dual sample condition} hold, then 
\begin{equation}
\nonumber
\begin{aligned}
Prob\Big\{&V_{\rm MHE}(\hat{X}(\mathcal{I}; \eta^*)) -G(\Lambda(\mathcal{I};\omega^*))
\leq \Delta,\\ 
& \qquad\qquad \hat{X}(\mathcal{I}; 
\eta^*)\;\text{satisfies}\;\eqref{eq.Constrained MHE problem(c)} \Big\}\geq1-(\epsilon_p + \epsilon_d)
\end{aligned}    
\end{equation}
holds with confidence at least $1-(\beta_p+\beta_d)$. In most cases, Problems \eqref{problem.constrained MHE} and \eqref{probelm.dual problem} satisfy the strong duality as long as the Slater condition holds \cite{boyd2004convex}, which leads to $\Delta_{\rm gap}=0$.
\end{corollary}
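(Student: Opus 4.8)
The plan is to obtain the claim by chaining the two one-sided guarantees of Theorem~\ref{theorem.Offline Training With Performance Guarantee} through a short string of inequalities and then applying the union bound on two separate probabilistic layers. First I would name the two ``good'' events over the runtime information vector $\mathcal{I}$: let $\mathcal{E}_p$ be the event that $V_{\rm MHE}(\hat{X}(\mathcal{I};\eta^*))\leq V_{\rm MHE}(\hat{X}^*(\mathcal{I}))+\Delta_p$ together with $\hat{X}(\mathcal{I};\eta^*)$ satisfying \eqref{eq.Constrained MHE problem(c)}, and let $\mathcal{E}_d$ be the event that $G(\Lambda(\mathcal{I};\omega^*))\geq G(\Lambda^*(\mathcal{I}))-\Delta_d$. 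Under the sample conditions \eqref{eq.primal sample condition} and \eqref{eq.dual sample condition}, Theorem~\ref{theorem.Offline Training With Performance Guarantee} states exactly that $Prob\{\mathcal{E}_p\}\geq 1-\epsilon_p$ with confidence at least $1-\beta_p$, and $Prob\{\mathcal{E}_d\}\geq 1-\epsilon_d$ with confidence at least $1-\beta_d$.

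Second, I would argue that on the intersection $\mathcal{E}_p\cap\mathcal{E}_d$ the target event holds. Adding the two defining inequalities and inserting the definition of the maximal duality gap gives $V_{\rm MHE}(\hat{X}(\mathcal{I};\eta^*))-G(\Lambda(\mathcal{I};\omega^*))\leq\big(V_{\rm MHE}(\hat{X}^*(\mathcal{I}))-G(\Lambda^*(\mathcal{I}))\big)+\Delta_p+\Delta_d\leq\Delta_{\rm gap}+\Delta_p+\Delta_d=\Delta$, while feasibility of $\hat{X}(\mathcal{I};\eta^*)$ is inherited directly from $\mathcal{E}_p$; hence $\mathcal{E}_p\cap\mathcal{E}_d$ is contained in the event appearing in the corollary. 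I would then close the probability side with the union bound, $Prob\{\mathcal{E}_p\cap\mathcal{E}_d\}\geq 1-Prob\{\mathcal{E}_p^c\}-Prob\{\mathcal{E}_d^c\}\geq 1-(\epsilon_p+\epsilon_d)$, and the confidence side with a second union bound: since the primal and dual verification data sets are drawn independently, the two high-confidence statements hold jointly with confidence at least $1-(\beta_p+\beta_d)$. The trailing remark about Slater's condition would be handled by invoking the standard strong-duality result for the convex Problem~\ref{problem.constrained MHE}, which forces $\Delta_{\rm gap}=0$ and thus $\Delta=\Delta_p+\Delta_d$.

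The main obstacle here is bookkeeping rather than mathematics: one must keep the two probabilistic layers cleanly separated, with the inner $Prob\{\cdot\}$ taken over the runtime draw of $\mathcal{I}$ and the outer confidence taken over the offline sample draws, and verify that the events being intersected live on the appropriate probability spaces so that both union bounds are legitimate. A secondary point worth stating explicitly in the write-up is that $\Delta_{\rm gap}$ is assumed finite (the $\max_{\mathcal{I}}$ being attained, or replaced by a supremum), which is automatic whenever strong duality holds and makes the definition of $\Delta$ meaningful.
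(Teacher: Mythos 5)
Your proposal is correct and follows essentially the same route as the paper's own (much terser) proof: intersect the two good events guaranteed by Theorem~\ref{theorem.Offline Training With Performance Guarantee}, chain the primal and dual suboptimality bounds through the duality gap to get the $\Delta$ bound, and apply the union bound once for the violation probabilities and once for the confidence levels. Your explicit remark that the inner probability is over the runtime draw of $\mathcal{I}$ while the outer confidence is over the offline verification samples is a useful clarification the paper leaves implicit, but it does not change the argument.
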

\begin{proof}
Using the union probability inequality $Prob\{A\cup B\}\leq Prob\{A\}+Prob\{B\}$
and the results in Theorem \ref{theorem.Offline Training With Performance Guarantee} , we can easily end this proof.
\end{proof}

\begin{remark}
Theorem \ref{theorem.Online Application With Performance Guarantee} already provides a ``hard" certificate to judge the performance $V_{\rm MHE}(\hat{X}(\mathcal{I}; \eta^*))$ of the learned primal estimator. Besides, we can bring some ideas from the control theory, such as the safety shield \cite{S.E.L.ReinforcementControl}, to ensure online feasibility. 
\end{remark}

\subsection{Primal-dual MHE}

Based on the above theoretical analysis, our offline MHE method building on the primal-dual estimator learning framework is shown in Algorithm \ref{alg:A}. We refer to this method as primal-dual MHE (PD-MHE).

\begin{algorithm}[!htb]
\caption{ Primal-dual MHE}\label{alg:A}
\hspace{0.01in} {\bf Input:}
confidence level $0<\beta\ll1$, violation probability $\epsilon>0$, and suboptimality level $\Delta>0$

\hspace{0.01in} {\bf Select:}
$\beta_p,\;\beta_d,\;\epsilon_p,\;\epsilon_d,$ such that $\beta_p+\beta_d=\beta$, $\epsilon_p+\epsilon_d=\epsilon$
\\
\\
\hspace{0.01in} \textcolor{blue}{\bf Offline Training}
\begin{algorithmic}[1]
\State Learn primal estimator $\hat{X}(\mathcal{I};\eta^*)$ as in \eqref{eq.primal learning}
\State Learn dual estimator $\Lambda(\mathcal{I};\omega^*)$ as in \eqref{eq.dual learning}
\State Validate $\hat{X}(\mathcal{I};\eta^*)$ and $\Lambda(\mathcal{I};\omega^*)$ using Theorem \ref{theorem.Offline Training With Performance Guarantee}
\If{\eqref{eq.primal sample condition} and \eqref{eq.dual sample condition} are satisfied and $\Delta_d+\Delta_p \le \Delta$} 
\State End training
\Else
\State Repeat
\EndIf
\end{algorithmic}
\hspace{0.01in} \textcolor{blue}{\bf Online Application} (for $t=M_t,M_t+1,M_t+2,...$)
\begin{algorithmic}[1]
\State Obtain $\mathcal{I}_{t}$
\If{$V_{\rm MHE}(\hat{X}(\mathcal{I}_t; \eta^*))-G(\Lambda(\mathcal{I}_t;\omega^*)\leq \Delta$; $\hat{X}(\mathcal{I}_t; \eta^*)$ satisfies \eqref{eq.Constrained MHE problem(c)}}
\State Apply $\hat{X}(\mathcal{I}_t; \eta^*)$ to obtain the estimate
\Else
\State Apply a backup estimator to obtain the estimate
\EndIf
\end{algorithmic}
\end{algorithm}

\section{Stability  Analysis}\label{sec.V}
In this section, we prove the stability of the offline primal estimator solved by \eqref{eq.primal learning}. Our analysis follows similar ideas in \cite{schiller2022lyapunov}, with suitable extensions to account for the ``$\Delta$-suboptimality" of the learned estimator. We begin with some useful definitions and lemmas.



\begin{definition}[Exponential $\delta$-\emph{IOSS} \cite{schiller2022lyapunov}] \label{def.exponential delta-IOSS}
The system has an exponential incremental input/output-to-state stability ($\delta$-\emph{IOSS}) property if there exists a quadratic $\delta$-\emph{IOSS} Lyapunov function $W_{\delta}$ and $\bar{P}_1, \bar{P}_2, \bar{Q}, \bar{R}\succ0$ such that
\begin{subequations}
\begin{equation}\label{eq.exponential delta-IOSS (a)}
{\Vert x-\Tilde{x} \Vert}^2_{\bar{P}_1} \leq W_{\delta}(x, \Tilde{x}) \leq 
{\Vert x-\Tilde{x} \Vert}^2_{\bar{P}_2},
\end{equation}
\begin{equation}\label{eq.exponential delta-IOSS (b)}
W_{\delta}(x^{+}, \Tilde{x}^{+}) \leq \gamma W_{\delta}(x, \Tilde{x}) + {\Vert \xi - \Tilde{\xi} \Vert}^2_{\bar{Q}}+{\Vert y - \Tilde{y} \Vert}^2_{\bar{R}}.
\end{equation}
\end{subequations}
Here $x^{+}$($\Tilde{x}^{+}$) represents the next state of $x$($\Tilde{x}$), $\xi$($\Tilde{\xi}$) represents the process noise, and $y$($\Tilde{y}$) represents the measurement.
\end{definition}

\begin{lemma}[Quadratic $\delta$-\emph{IOSS} Lyapunov function \cite{schiller2022lyapunov}]\label{lemma.Quadratic delta-IOSS Lyapunov function} 
The system \eqref{eq.sys} admits a Quadratic $\delta$-\emph{IOSS} Lyapunov function if there exists $\gamma \in [0,1)$ and symmetric matrices $\bar{P}, \bar{Q}, \bar{R}\succ0$ such that
\begin{equation}
\begin{aligned}
\begin{bmatrix}
A^\mathrm{T}\bar{P}A-\gamma\bar{P}-C^\mathrm{T}\bar{R}C& A^\mathrm{T}\bar{P}B-C^\mathrm{T}\bar{R}D
\\
B^\mathrm{T}\bar{P}A-D^\mathrm{T}\bar{R}C & B^\mathrm{T}\bar{P}B-\bar{Q}_1-D^\mathrm{T}\bar{R}D
\end{bmatrix}\preceq0
\end{aligned}    
\end{equation}
holds, then $W_\delta(x,\Tilde{x})$ is a $\delta$-\emph{IOSS} Lyapunov function that satisfies $\bar{P}_1=\bar{P}_2=\bar{P}$ in \eqref{eq.exponential delta-IOSS (a)}. Here, $B=\begin{bmatrix}I_{n\times n}& 0_{n \times m}\end{bmatrix}$, $D=\begin{bmatrix}0_{m \times n}& I_{m\times m}\end{bmatrix}$ ,and $\bar{Q}_1=\begin{bmatrix}\bar{Q}&0_{n \times m}\\0_{m \times n}&0_{m \times m}\end{bmatrix}$.
\end{lemma}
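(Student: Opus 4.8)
The plan is to verify directly that $W_\delta(x,\tilde{x}) := {\Vert x-\tilde{x}\Vert}^2_{\bar{P}} = (x-\tilde{x})^{\mathrm{T}}\bar{P}(x-\tilde{x})$ meets both conditions of Definition~\ref{def.exponential delta-IOSS} with $\bar{P}_1=\bar{P}_2=\bar{P}$. The sandwich bound \eqref{eq.exponential delta-IOSS (a)} then holds trivially, since $W_\delta(x,\tilde{x})$ is by construction equal to ${\Vert x-\tilde{x}\Vert}^2_{\bar{P}}$; the entire content of the lemma is therefore the dissipation inequality \eqref{eq.exponential delta-IOSS (b)}, which I would derive from the given LMI by reading it as a negative-semidefinite quadratic form and evaluating it at an appropriately stacked increment vector.

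For \eqref{eq.exponential delta-IOSS (b)}, first I would take two solutions of \eqref{eq.sys}, namely $x^{+}=Ax+\xi$, $y=Cx+\zeta$ and $\tilde{x}^{+}=A\tilde{x}+\tilde{\xi}$, $\tilde{y}=C\tilde{x}+\tilde{\zeta}$, where the implicit measurement noises are $\zeta:=y-Cx$ and $\tilde{\zeta}:=\tilde{y}-C\tilde{x}$. Introducing the increments $e:=x-\tilde{x}$, $\delta\xi:=\xi-\tilde{\xi}$, $\delta\zeta:=\zeta-\tilde{\zeta}$ and the stacked disturbance increment $\omega:=\begin{bmatrix}\delta\xi\\ \delta\zeta\end{bmatrix}\in\mathbb{R}^{n+m}$, the selector matrices $B=\begin{bmatrix}I_{n\times n}&0_{n\times m}\end{bmatrix}$ and $D=\begin{bmatrix}0_{m\times n}&I_{m\times m}\end{bmatrix}$ give the identities $B\omega=\delta\xi$, $D\omega=\delta\zeta$, $\omega^{\mathrm{T}}\bar{Q}_1\omega={\Vert\xi-\tilde{\xi}\Vert}^2_{\bar{Q}}$, together with the difference dynamics $x^{+}-\tilde{x}^{+}=Ae+B\omega$ and, crucially, the output increment $y-\tilde{y}=Ce+D\omega$. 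I would then left- and right-multiply the LMI by $\begin{bmatrix}e^{\mathrm{T}}&\omega^{\mathrm{T}}\end{bmatrix}$ and its transpose; grouping the $\bar{P}$-, $\bar{R}$-, $\gamma\bar{P}$- and $\bar{Q}_1$-terms turns the resulting scalar inequality into
\begin{align*}
(Ae+B\omega)^{\mathrm{T}}\bar{P}(Ae+B\omega)\le \gamma\, e^{\mathrm{T}}\bar{P}e+\omega^{\mathrm{T}}\bar{Q}_1\omega+(Ce+D\omega)^{\mathrm{T}}\bar{R}(Ce+D\omega),
\end{align*}
which, upon substituting the identities above, is exactly $W_\delta(x^{+},\tilde{x}^{+})\le\gamma W_\delta(x,\tilde{x})+{\Vert\xi-\tilde{\xi}\Vert}^2_{\bar{Q}}+{\Vert y-\tilde{y}\Vert}^2_{\bar{R}}$. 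Since the two solutions were arbitrary, \eqref{eq.exponential delta-IOSS (b)} follows.

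I do not expect a genuine obstacle here: the argument is an algebraic verification, and the only delicate point is the bookkeeping that matches the block partition of the LMI to the sum-of-squares expansion — in particular, recognising that the selector pair $(C,D)$ together with the padded weight $\bar{Q}_1$ reproduces the full output increment $y-\tilde{y}$ (including the measurement-noise part) and the process-noise cost ${\Vert\xi-\tilde{\xi}\Vert}^2_{\bar{Q}}$, which is why the stacked-disturbance parametrisation is the right one. For the time-varying system \eqref{eq.sys}, the identical computation is carried out pointwise in $t$ with $A_t,C_t$ in place of $A,C$; the statement then requires the LMI to hold for all $t$ with a common $\bar{P}$, and nothing else changes.
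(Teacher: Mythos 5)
Your verification is correct: evaluating the LMI as a quadratic form at the stacked increment $\begin{bmatrix}e^{\mathrm{T}} & \omega^{\mathrm{T}}\end{bmatrix}^{\mathrm{T}}$ and regrouping the completed squares yields exactly the dissipation inequality \eqref{eq.exponential delta-IOSS (b)}, while \eqref{eq.exponential delta-IOSS (a)} holds with equality for $\bar{P}_1=\bar{P}_2=\bar{P}$. The paper itself gives no proof of this lemma (it is imported from \cite{schiller2022lyapunov}), but your argument is the standard one underlying that reference, and your closing remark about requiring the LMI pointwise in $t$ for the time-varying pair $(A_t,C_t)$ correctly identifies the only adaptation needed for system \eqref{eq.sys}.
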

Before proposing the main theorem, we define the maximum of the largest  generalized
eigenvalue $\lambda_{\rm max}$ as
\begin{equation}
\begin{aligned}
\lambda_{\rm max} := \max\Big\{&\max_{i \in \mathbb{I}_{[M_t, \infty)}}\{\lambda_{\rm max}(P_{i}^{-1}, P_{i-M_t}^{-1})\},\\
&\qquad \max_{i \in \mathbb{I}_{[0, M_t-1]}}\{\lambda_{\rm max}(P_{i}^{-1},P_{0}^{-1})\}\Big\}.
\end{aligned}
\end{equation}
\begin{theorem}[Stability of the Primal Estimator]\label{theorem.stability}
The proposed estimator with ``$\Delta$-suboptimality" is a stable estimator if $M_t >- \frac{\ln{4\lambda_{\rm max}}}{\ln{\gamma}}, \lambda_{\rm max} \geq \frac{1}{4}$ and 
\begin{equation}
\begin{aligned}
&\begin{bmatrix}
\mathbb{M}_{11}& \mathbb{M}_{12}
\\
\mathbb{M}_{12}^\mathrm{T} &\mathbb{M}_{22}
\end{bmatrix}
\preceq0
\\
&\mathbb{M}_{11} = A^\mathrm{T} P_{t-M_t}^{-1}A-\gamma P_{t-M_t}^{-1}-2C^\mathrm{T}{R}^{-1}C\\
&\mathbb{M}_{12} = A^\mathrm{T}P_{t-M_t}^{-1}B-2C^\mathrm{T}{R}^{-1}D\\
&\mathbb{M}_{22} = B^{\mathrm{T}}P_{t-M_t}^{-1}B - \overline{Q^{-1}} - 2D^{\mathrm{T}}R^{-1}D,\;t\in \mathbb{I}_{[M_t,\infty)}.
\end{aligned}
\end{equation}
In particular, the state estimation error is bounded above by
\begin{equation}\label{eq.stability results}
\begin{aligned}
&{\Vert \hat{x}_{t}^{\Delta}- x_{t} \Vert}_{P_{t-M_t}^{-1}} \leq 2\sqrt{\rho}^{t}{\Vert \hat{x}_{0}-{x}_{0} \Vert}_{P_{0}^{-1}}+\sqrt{\frac{2\Delta}{1-\rho^{M_t}}}\\
&\quad +2\sqrt{\frac{1}{1-\sqrt{\rho}}}\max_{i\in\mathbb{I}_{[0,t-1]}}\left\{\sqrt[4]{\rho}^{i}{\Vert {\xi}_{t-i-1} \Vert}_{Q^{-1}}\right\},
\end{aligned}
\end{equation}
where $\hat{x}^{\Delta}_{t}$ denotes the estimate obtained by a $\Delta$-suboptimallity estimator and $\rho=(4\lambda_{\rm max})^{\frac{1}{M_t}}\gamma<1$. Besides, $\overline{Q^{-1}}=\begin{bmatrix}Q^{-1}&0_{n \times m}\\0_{m \times n}&0_{m \times m}\end{bmatrix}$.
\end{theorem}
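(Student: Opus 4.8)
The plan is to transplant the $\delta$-\emph{IOSS} Lyapunov argument behind Lemma~\ref{lemma.Quadratic delta-IOSS Lyapunov function} onto a single moving-horizon window and then close a recursion across consecutive windows. I would first invoke Lemma~\ref{lemma.Quadratic delta-IOSS Lyapunov function} with $\bar P = P_{t-M_t}^{-1}$, $\bar Q = Q^{-1}$, $\bar R = 2R^{-1}$, so that the displayed LMI is precisely its hypothesis; this produces a quadratic $\delta$-\emph{IOSS} Lyapunov function $W_{\delta}(x,\tilde x)=\Vert x-\tilde x\Vert^2_{P_{t-M_t}^{-1}}$ obeying $W_{\delta}(x^{+},\tilde x^{+})\le\gamma W_{\delta}(x,\tilde x)+\Vert\xi-\tilde\xi\Vert^2_{Q^{-1}}+\Vert y-\tilde y\Vert^2_{2R^{-1}}$. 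The structural fact I would exploit is that the $\Delta$-suboptimal MHE solution $\hat X^{\Delta}(\mathcal I_t)=(\hat x^{\Delta}_{\cdot|t},\hat\xi^{\Delta}_{\cdot|t})$ and the true trajectory $x_{\cdot}$ (driven by the actual $\xi_{\cdot},\zeta_{\cdot}$) are both trajectories of \eqref{eq.sys} on $[t-M_t,t]$ consistent with the \emph{same} recorded measurements $y_{t-M_t:t-1}$, so the output-mismatch term $\Vert y-\tilde y\Vert^2_{2R^{-1}}$ vanishes at every step.

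Second, I would iterate the $\delta$-\emph{IOSS} decrease inequality along the pair $(\hat x^{\Delta}_{\cdot|t},x_{\cdot})$ from $i=t-M_t$ to $t-1$, which---after the output terms drop---gives $\Vert\hat x^{\Delta}_{t}-x_t\Vert^2_{P_{t-M_t}^{-1}}\le\gamma^{M_t}\Vert\hat x^{\Delta}_{t-M_t|t}-x_{t-M_t}\Vert^2_{P_{t-M_t}^{-1}}+\sum_{i=t-M_t}^{t-1}\gamma^{t-i-1}\Vert\hat\xi^{\Delta}_{i|t}-\xi_i\Vert^2_{Q^{-1}}$. Applying $\Vert a-b\Vert^2\le 2\Vert a\Vert^2+2\Vert b\Vert^2$ to each term and peeling the arrival term about the prior $\hat x^{\Delta}_{t-M_t}$, the pieces carrying $(\hat x^{\Delta},\hat\xi^{\Delta})$ reassemble---since the discount factors coincide with those of \eqref{eq.MHE cost} and the dropped $\Vert\hat\zeta^{\Delta}\Vert^2$ summands are nonnegative---into $2\,V_{\rm MHE}(\hat X^{\Delta}(\mathcal I_t))$.

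Third, I would bound $V_{\rm MHE}(\hat X^{\Delta}(\mathcal I_t))$. By the ``$\Delta$-suboptimality'' property (weak duality; cf.\ the discussion following Theorem~\ref{theorem.Online Application With Performance Guarantee}), it does not exceed $V_{\rm MHE}(\hat X^{*}(\mathcal I_t))+\Delta$, and by optimality of $\hat X^{*}$ together with feasibility of the true window trajectory (the truncated-Gaussian supports guarantee $\xi_i\in\Xi_\xi$, $\zeta_i\in\Xi_\zeta$), this is at most $\gamma^{M_t}\Vert x_{t-M_t}-\hat x^{\Delta}_{t-M_t}\Vert^2_{P_{t-M_t}^{-1}}+\sum_i\gamma^{t-i-1}\big(\Vert\xi_i\Vert^2_{Q^{-1}}+\Vert\zeta_i\Vert^2_{R^{-1}}\big)+\Delta$. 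Substituting back and converting $\Vert x_{t-M_t}-\hat x^{\Delta}_{t-M_t}\Vert^2_{P_{t-M_t}^{-1}}$ into a $P_{t-2M_t}^{-1}$-weighted norm at the cost of the factor $\lambda_{\rm max}$ (whose definition is tailored exactly to this passage), I obtain a horizon-step recursion $e_t\le\rho^{M_t}e_{t-M_t}+4\sum_{i=t-M_t}^{t-1}\gamma^{t-i-1}\Vert\xi_i\Vert^2_{Q^{-1}}+2\Delta$ (plus the analogous $\Vert\zeta_i\Vert^2_{R^{-1}}$ contribution), where $e_t:=\Vert\hat x^{\Delta}_{t}-x_t\Vert^2_{P_{t-M_t}^{-1}}$ and $\rho^{M_t}=4\lambda_{\rm max}\gamma^{M_t}<1$ exactly under $M_t>-\ln(4\lambda_{\rm max})/\ln\gamma$ and $\lambda_{\rm max}\ge\tfrac14$.

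Finally I would unroll this recursion in steps of $M_t$ down to the start-up interval $t<M_t$, where the incomplete initial windows are treated by the same argument but now comparing $P_{t}^{-1}$ with $P_{0}^{-1}$ through the second block of generalized eigenvalues inside $\lambda_{\rm max}$, which produces the leading factor $2$. This bounds $e_t$ by $4\rho^{t}e_0$, plus (after reindexing the resulting double sum and using $\gamma\le\rho$, so each $\Vert\xi_j\Vert^2_{Q^{-1}}$ collects total weight at most $\rho^{t-1-j}$) a term $4\sum_{j=0}^{t-1}\rho^{t-1-j}\Vert\xi_j\Vert^2_{Q^{-1}}$, plus the analogous $\zeta$-term, plus $2\Delta/(1-\rho^{M_t})$. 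Using $\sum_{j=0}^{t-1}\rho^{t-1-j}\Vert\xi_j\Vert^2_{Q^{-1}}\le\frac{1}{1-\sqrt{\rho}}\big(\max_{i\in\mathbb{I}_{[0,t-1]}}\sqrt[4]{\rho}^{i}\Vert\xi_{t-i-1}\Vert_{Q^{-1}}\big)^2$ and taking square roots via $\sqrt{a+b+c}\le\sqrt a+\sqrt b+\sqrt c$ then delivers \eqref{eq.stability results}. The $\delta$-\emph{IOSS}-contraction-fed-by-suboptimality core is routine; the main obstacle is the bookkeeping---propagating the time-varying $P_{t-M_t}^{-1}$-weighted norms faithfully across consecutive horizons (which is exactly what forces the generalized-eigenvalue constant $\lambda_{\rm max}$ and the growth condition on $M_t$), handling the truncated start-up windows, and collapsing the resulting double geometric sum into the stated max-form.
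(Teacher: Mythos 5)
Your proposal follows essentially the same route as the paper's proof in Appendix \ref{appendix.C}: instantiate Lemma \ref{lemma.Quadratic delta-IOSS Lyapunov function} with $\bar P=P_{t-M_t}^{-1}$, $\bar Q=Q^{-1}$, $\bar R=2R^{-1}$ so that the theorem's LMI yields a windowwise quadratic $\delta$-\emph{IOSS} contraction; iterate it $M_t$ times along the pair formed by the $\Delta$-suboptimal estimated trajectory and the true one; reassemble the right-hand side into $V_{\rm MHE}(\hat X^{\Delta})$ plus true-noise and prior-error terms; invoke ``$\Delta$-suboptimality'' together with feasibility of the true trajectory; recouple the time-varying weights via $\lambda_{\rm max}$ to obtain the contraction $\rho^{M_t}=4\lambda_{\rm max}\gamma^{M_t}<1$; and unroll, handling the start-up windows through the second block in the definition of $\lambda_{\rm max}$. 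Your justification for the output term (both trajectories reproduce the same recorded $y_{t-M_t:t-1}$, so $y-\tilde y=0$, after which the nonnegative $\Vert\hat\zeta^{\Delta}_{i|t}\Vert^2_{R^{-1}}$ summands are re-inserted to complete $V_{\rm MHE}$) is in fact more explicit than the paper's, which simply asserts \eqref{eq.Lyapunov function}. One substantive point: when you bound $V_{\rm MHE}(\hat X^{*}(\mathcal I_t))$ by the cost of the true trajectory, the stage terms $\sum_i\gamma^{t-i-1}\Vert\zeta_i\Vert^2_{R^{-1}}$ necessarily enter, and you correctly carry them (``plus the analogous $\zeta$-term''); but your closing claim that the computation then ``delivers \eqref{eq.stability results}'' is inconsistent, because \eqref{eq.stability results} contains no measurement-noise term. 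The paper's own proof has the identical defect --- it silently discards $\sum_{i=1}^{M_t}\gamma^{i-1}\Vert\zeta_{t-i}\Vert^2_{R^{-1}}$ in the equality step of \eqref{eq.bound for lyapunov function} --- so what both arguments actually establish is \eqref{eq.stability results} augmented by a term of the form $2\sqrt{1/(1-\sqrt\rho)}\max_{i\in\mathbb{I}_{[0,t-1]}}\{\sqrt[4]{\rho}^{i}\Vert\zeta_{t-i-1}\Vert_{R^{-1}}\}$. You should either state the bound with that additional term or justify its removal; as written, neither your derivation nor the paper's closes this step.
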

This theorem shows that for the appropriate horizon length, the error sequence of the estimate generated by Algorithm \ref{alg:A} can be bounded by a function of the initial estimation error, the maximum norm of the process noise with time discounted, and the desired suboptimality level $\Delta$. Although this result does not restrict the range of $\Delta$, for a large $\Delta$,
such an upper bound becomes meaningless.

\begin{remark}
Compared to the original definition of robustly globally
exponentially stable (RGES) given in \cite{knufer2018robust, schiller2022lyapunov}, the derived error bound in \eqref{eq.stability results} includes an additional term to reveal the effect of estimator suboptimality.
\end{remark}
\section{Numerical Results}\label{sec.VI}
In this section, we use a simple example to illustrate the  performance of the proposed algorithm. We consider the following stochastic system
\begin{equation}
\begin{aligned}
x_{t+1} &= Ax_t+\xi_t
\\
y_t &= Cx_t + \zeta_t,
\end{aligned}
\end{equation}
where $x_{t}=[x_{t}^{(1)}, x_{t}^{(2)}]^\mathrm{T}\in\mathbb{R}^2$, $y_t\in\mathbb{R}$,
\begin{equation}
\begin{aligned}
A = \begin{bmatrix}
1&0.1\\
0&1
\end{bmatrix},\; 
C = \begin{bmatrix}
1&0
\end{bmatrix}.
\end{aligned}
\end{equation}
$\xi_t\sim\mathcal{N}(0,Q) \wedge \xi_t\geq0$ and $\zeta_t\sim\mathcal{N}(0,R) \wedge \zeta_t\leq0$ are the noise satisfying truncated Guassian distributions. The covariance matrix of the noise is set to
\begin{equation}
\begin{aligned}
Q = \begin{bmatrix}
0.1^{2}&0\\
0&0.1^{2}
\end{bmatrix},\;
R = 1.
\end{aligned}
\end{equation}
The projection functions defined in \eqref{eq.definition of projection} can be analytically expressed as
\begin{equation}
\nonumber
\Pi_{\Tilde{\Xi}_{\xi}}(z)=\left\{
\begin{aligned}
&0,\;\;&z^{(1)}\geq0,\;z^{(2)}&\geq0\\
&-z^{(1)}/2,\;\;&z^{(1)}\leq0,\;z^{(2)}&\geq0
\\
&-z^{(2)}/2,\;\;&z^{(1)}\geq0,\;z^{(2)}&\leq0\\
&{\Vert{z/2}\Vert}^2_2,\;\;&z^{(1)}\leq0,\;z^{(2)}&\leq0.
\end{aligned}
\right.
\end{equation}
and
\begin{equation}
\Pi_{\Tilde{\Xi}_{\zeta}}(z)=\left\{
\begin{aligned}
&0,\;\;&z\leq0\\
&z/2,\;\;&z\geq0.
\end{aligned}
\right.
\end{equation}

For the sake of comparison, we consider the performance
indices given by the root mean square error (RMSE) and asymptotic root mean square error (ARMSE) as in \cite{alessandri2008moving,alessandri2011moving}. To demonstrate the performance of Algorithm \ref{alg:A}, we take the Kalman filter and online MHE as baselines. We illustrate our proposed PD-MHE using a Deep Neural Network
function approximator ( 3 hidden layers and the number of neurons are [512, 512, 512] )  with Rectified Linear Unit. To solve
online MHE, we use CasADi \cite{Andersson2019}, the state-of-the-art
optimization problem solver. We set $M_t=10$ and performed 200 Monte-Carlo experiments for each method, and the results are given in Fig. \ref{fig.results} and Table \ref{tab:1}. The simulations are performed on a computer equipped with Intel i9-7980 XE processor and NVIDIA Titan XP GPU. 

\begin{figure}[!htb]
\centering
\includegraphics[width=0.75\linewidth]{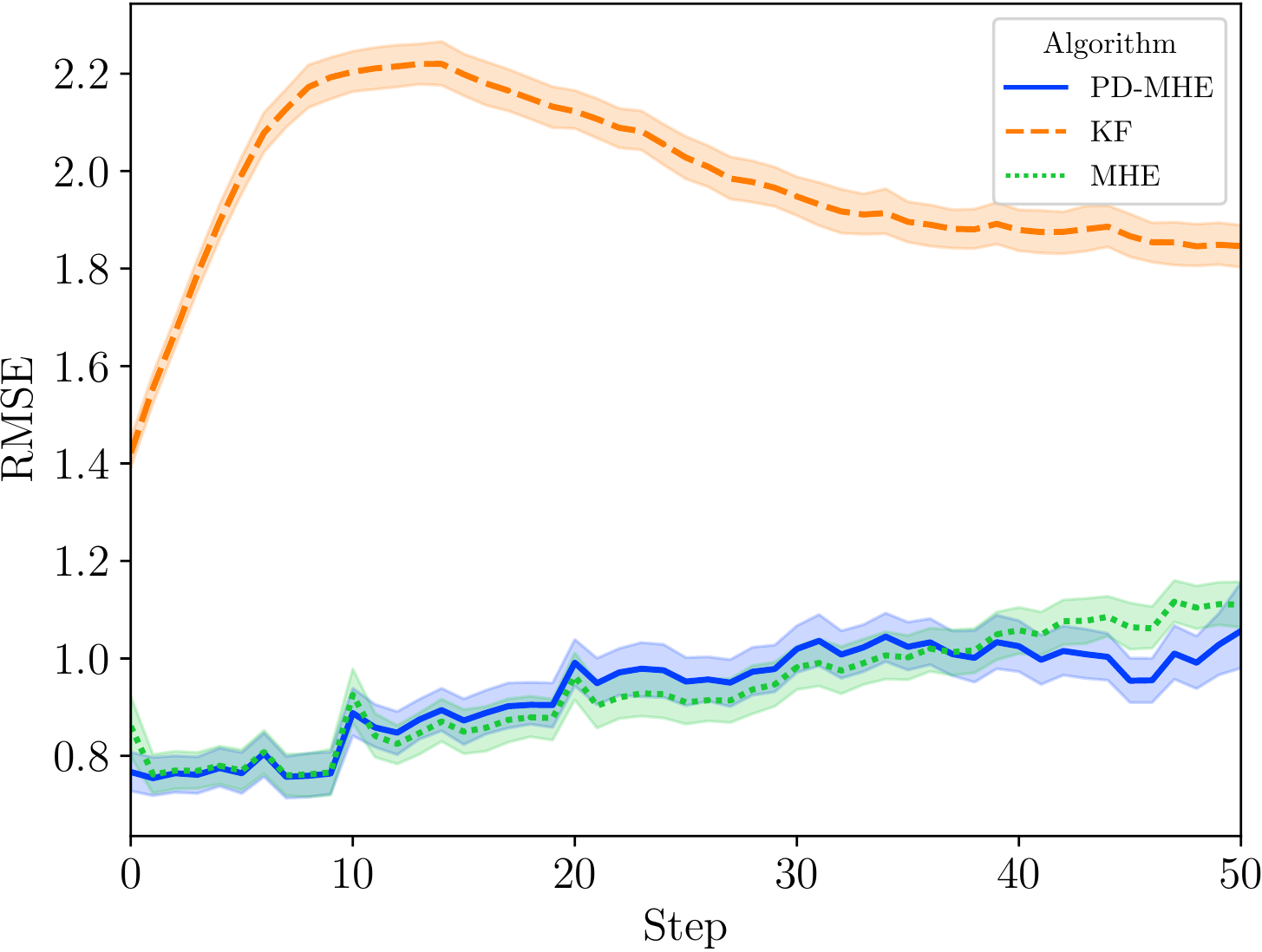}
\caption{Simulation results of the PD-MHE, KF, and online MHE. The
solid and dotted lines correspond to the means and the shaded
regions correspond to 95\% confidence intervals over 200
runs.}
\label{fig.results}
\end{figure}

\begin{table}[!htb]
    \caption{Comparsion of ARMSE and one-step computation time}
    \begin{center}
      \begin{tabular}{c c c}
    \hline
Algorithm&ARMSE& Computation Time  (ms)\\
\hline
 PD-MHE   & 0.9885    & 2.04 \\ 
\hline
 KF   &  1.9989   & 0.067 \\\hline
 MHE   & 0.9871    & 17.88 \\
 \hline
    \end{tabular} 
    \end{center}
    \label{tab:1}
\end{table}

From the results, we can see that the constraints defined on the process and measurement noise bring asymmetry to the probability density function, so that KF has already diverged in this constrained estimation problem.
Compared to KF, online MHE and PD-MHE capture the information given by the constraints, leading to better estimation accuracy. As for runtime, our algorithm allows us to obtain estimates significantly
faster than online MHE, with an average speed-up of over 8x compared to CasaADi. In summary, the proposed algorithm succeeds in learning a stable estimator for linear constrained systems, with negligible performance loss with respect to the online MHE.   
\section{Conclusions}\label{sec.VII}

In this paper, we proposed a new method, called primal-dual estimator learning, for approximating the explicit moving horizon estimation for linear constrained
systems. We approximated the moving horizon estimation
directly using supervised learning techniques, and invoked
two verification schemes to ensure the performance of
the approximated estimator. Since the proposed verification
scheme only requires the evaluation of primal and dual estimators, our
algorithm is computationally efficient, and can be implemented even on resource-constrained systems. The future work will
consider the iterative offline learning process and the influence of the capacity of the approximation function.

\appendix
\subsection{Derivation of Problem \ref{probelm.dual problem}}\label{appendix.A}
Considering the primal constrained moving horizon estimation problem \ref{problem.constrained MHE}, its Lagrange function is defined as
\begin{equation}
\begin{aligned}
&L(\hat x_{\cdot|t}, \hat{\xi}_{\cdot|t}, \hat{\zeta}_{\cdot|t},\lambda,\mu)
=\gamma^{M_t}{\Vert \hat{x}_{t-M_t|t}-\hat{x}_{t-M_t}^{*} \Vert}^2_{P_{t-M_t}^{-1}}\\
&\ +\sum_{i=t-M_t}^{t-1}\gamma^{t-i-1}{\Vert \hat{\xi}_{i|t} \Vert}^2_{Q^{-1}}+\sum_{i=t-M_t}^{t-1}\gamma^{t-i-1}{\Vert \hat{\zeta}_{i|t} \Vert}^2_{R^{-1}}
\\&\ +\sum_{i=t-M_t}^{t-1}\lambda_{i}^{\mathrm{T}}(\hat{x}_{i+1|t}-A_i\hat{x}_{i|t}-\hat{\xi}_{i|t})\\
&\ +\sum_{i=t-M_t}^{t-1}\mu_{i}^{\mathrm{T}}(y_{i}-C_i\hat{x}_{i|t}-\hat{\zeta}_{i|t}),
\end{aligned}    
\end{equation}
where ${\lambda_i}$ and $\mu_i$ are Lagrange multipliers. The Lagrange dual function is given by
\begin{equation}\label{eq.lagrange dual function(appendix)}
\begin{aligned}
g(\lambda,\mu)=\inf_{\hat x_{i|t}, \hat{\xi}_{i|t} \in \Xi_{\xi}, \hat{\zeta}_{i|t} \in \Xi_{\zeta}} L(\hat x_{\cdot|t}, \hat{\xi}_{\cdot|t}, \hat{\zeta}_{\cdot|t},\lambda,\mu).
\end{aligned}    
\end{equation}
Because the Lagrange function is convex with respect to $\hat x_{i|t},\; \hat{\xi}_{i|t}$, and $\hat{\zeta}_{i|t}$,  optimal variables $\hat x_{i|t}^*, \hat{\xi}_{i|t}^*, \hat{\zeta}_{i|t}^*$ can be calculated by the necessary condition:
\begin{subequations}
\begin{equation}\label{eq.derivative1}
\begin{aligned}
\frac{\partial{L(\cdot)}}{\hat{x}^*_{t-M_t|t}}&=2\gamma^{M_t}P^{-1}_{t-M_t}(\hat{x}^*_{t-M_t|t}-\hat{x}_{t-M_t}^{*})\\&-A_{t-M_t}^\mathrm{T}\lambda_{t-M_t}-C_{t-M_t}^\mathrm{T}\mu_{t-M_t}=0
\end{aligned}\\   
\end{equation}
\begin{equation}\label{eq.derivative2}
\begin{aligned}
\frac{\partial{L(\cdot)}}{\hat{x}^*_{i|t}}=\lambda_{i-1}-A_i^\mathrm{T}\lambda_i-C_i^{\mathrm{T}}\mu_i=0,\;i\in\mathbb{I}_{[t-M_t+1,t-1]}
\end{aligned}
\end{equation}
\begin{equation}\label{eq.derivative3}
\begin{aligned}
\frac{\partial{L(\cdot)}}{\hat{x}^*_{t|t}}=\lambda_{t-1}=0
\end{aligned}
\end{equation}
\begin{equation}\label{eq.derivative4}
\begin{aligned}
\hat{\xi}^*_{i|t}&=\arg\min_{\hat{\xi}_{i|t}}\left\{\gamma^{t-i-1}{\Vert \hat{\xi}_{i|t} \Vert}^2_{Q^{-1}}-\lambda_{i}^\mathrm{T}\hat{\xi}_{i|t}\right\}\\
&=\arg\min_{\hat{\xi}_{i|t}}\left\{{\Vert \hat{\xi}_{i|t} \Vert}^2_{Q^{-1}}-\gamma^{i+1-t}\lambda_{i}^\mathrm{T}\hat{\xi}_{i|t}\right\}
,\;i\in\mathbb{I}_{[t-M_t,t-1]}
\end{aligned}
\end{equation}
\begin{equation}\label{eq.derivative5}
\begin{aligned}
\hat{\zeta}^*_{i|t}=\arg\min_{\hat{\zeta}_{i|t}}\left\{{\Vert \hat{\zeta}_{i|t} \Vert}^2_{R^{-1}
}-\gamma^{i+1-t}\mu_{i}^\mathrm{T}\hat{\zeta}_{i|t}\right\},\;i\in\mathbb{I}_{[t-M_t,t-1]}.
\end{aligned}
\end{equation}
\end{subequations}
First, we observe that
\begin{equation}\label{eq.x_hat_{t-M_t}}
\hat{x}^*_{t-M_t|t}=\frac{1}{2\gamma^{M_t}}P_{t-M_t}(A_{t-M_t}^\mathrm{T}\lambda_{t-M_t}+C_{t-M_t}^\mathrm{T}\mu_{t-M_t})+\hat{x}_{t-M_t}^{*}.   
\end{equation}
Similar to the method proposed in \cite{goodwim2004duality,goodwin2005lagrangian,muller2006duality}, we express the \eqref{eq.derivative4} and \eqref{eq.derivative5} in the form of projection function. We denote $\Tilde{\xi}_{i|t}=Q^{-1/2}\hat{\xi}_{i|t}$ and
$\Tilde{\zeta}_{i|t}=R^{-1/2}\hat{\zeta}_{i|t}$. Then \eqref{eq.derivative4} and \eqref{eq.derivative5} can be rewritten as
\begin{equation}
\begin{aligned}
\Tilde{\xi}^*_{i|t}=\arg\min_{\Tilde{\xi}_{i|t}\in\Tilde{\Xi}_{\xi}}\left\{{\Vert \Tilde{\xi}_{i|t} \Vert}^2_2-\gamma^{i+1-t}\lambda_{i}^\mathrm{T}Q^{1/2}\Tilde{\xi}_{i|t}\right\},\\
\Tilde{\zeta}^*_{i|t}=\arg\min_{\Tilde{\zeta}_{i|t}\in\Tilde{\Xi}_{\zeta}}\left\{{\Vert \Tilde{\zeta}_{i|t} \Vert}^2_2-\gamma^{i+1-t}\mu_{i}^\mathrm{T}R^{1/2}\Tilde{\zeta}_{i|t}\right\},\\
i\in\mathbb{I}_{[t-M_t,t-1]}.
\end{aligned}    
\end{equation}
The solution can be formulated as
\begin{equation}\label{eq.xi_hat_{i}}
\begin{aligned}
\hat{\xi}^*_{i|t}=Q^{1/2}\Pi_{\Tilde{\Xi}_{\xi}}\left(\gamma^{i+1-t}Q^{1/2}\lambda_i\right),\;i\in\mathbb{I}_{[t-M_t,t-1]}\\
\hat{\zeta}^*_{i|t}=R^{1/2}\Pi_{\Tilde{\Xi}_{\zeta}}\left(\gamma^{i+1-t}R^{1/2}\mu_i\right),\;i\in\mathbb{I}_{[t-M_t,t-1]},
\end{aligned}     
\end{equation}
where $\Tilde{\Xi}_{\xi}$ and $\Tilde{\Xi}_{\zeta}$ are defined by \eqref{eq.definition of special sets}. At the meantime, $\Tilde{\Pi}_{\Xi_{\xi}}$ and $\Tilde{\Pi}_{\Xi_{\zeta}}$ are defined by \eqref{eq.definition of projection}. Plugging \eqref{eq.x_hat_{t-M_t}} and \eqref{eq.xi_hat_{i}} into \eqref{eq.lagrange dual function(appendix)}, we have \eqref{eq.Lagrange dual function}.

\subsection{Proof of Theorem \ref{theorem.Offline Training With Performance Guarantee}}\label{appendix.B}
\begin{proof}
We observe that the constraints defined by \eqref{eq.Constrained MHE problem(c)} can always be written as several inequalities according to the convex property by Assumption \ref{assumption.convex set} and the affine function defined by \eqref{eq.Constrained MHE problem(b)}. We denote them as $\mathcal{C}_i(\hat{X})\leq0, i\in\mathbb{I}_{[0,r]}$. Then for a given $\hat{X}(\cdot)$, consider the following function:
\begin{equation}
\begin{aligned}
&u(\mathcal{I}):=\max \big\{\max_{j}{\mathcal{C}_{j}(\hat{X}(\mathcal{I}))},\\
&\qquad \qquad V_{\rm MHE}(\hat{X}(\mathcal{I}))- V_{\rm MHE}(\hat{X}^*(\mathcal{I}))-\Delta_p
\big\}.
\end{aligned}    
\end{equation}
Define $\hat{u}_N:=\max_{i=1,...,N}{u(\mathcal{I}^{i})}$, where \{$\mathcal{I}^{i}$\} are independent samples. Following the result in Lemma \ref{lemma.Smallest Sample Size for Reliable Performance} and set $\hat{u}_N=0$, we can derive the results of the primal learning part in Theorem \ref{theorem.Offline Training With Performance Guarantee}. The proof of the dual learning part can be derived in a similar way. 
\end{proof}
\subsection{Proof of Theorem \ref{theorem.stability}}\label{appendix.C}
\begin{proof}
Based on Lemma \ref{lemma.Quadratic delta-IOSS Lyapunov function} and Definition \ref{def.exponential delta-IOSS}, we observe that ${\Vert \hat{x}^{\Delta}_{t} -{x}_{t}\Vert}^{2}_{\frac{1}{2}P_{t-M_t}^{-1}}$ is a $\delta$-\emph{IOSS} Lyapunov function which satisfies \footnote{In appendix \ref{appendix.C}, we use the notation $\Delta$ to represent the $\Delta$-suboptimality.}
\begin{equation}\label{eq.Lyapunov function}
\begin{aligned}
&{\Vert \hat{x}^{\Delta}_{t}- {x}_{t} \Vert}^{2}_{\frac{1}{2}P_{t-M_t}^{-1}} \leq \gamma{\Vert \hat{x}_{t-1|t}^{\Delta} - {x}_{t-1} \Vert}^{2}_{\frac{1}{2}P_{t-M_t}^{-1}}\\
&\qquad\qquad\quad +{\Vert \hat{\xi}_{t-1|t}^{\Delta} - {\xi}_{t-1} \Vert}^{2}_{\frac{1}{2}Q^{-1}}
+{\Vert \hat{\zeta}_{t-1|t}^{\Delta} \Vert}^{2}_{R^{-1}}.
\end{aligned}
\end{equation}

By applying \eqref{eq.Lyapunov function} $M_t$ times, we obtain
\begin{equation}
\begin{aligned}
&{\Vert \hat{x}_{t}^{\Delta}- x_{t} \Vert}^{2}_{\frac{1}{2}P_{t-M_t}^{-1}} \\
&\leq \gamma^{M_t}{\Vert \hat{x}_{t-M_t|t}^{\Delta} - {x}_{t-M_t} \Vert}^{2}_{\frac{1}{2}P_{t-M_t}^{-1}}\\
&\quad+\sum_{i=1}^{M_t}\gamma^{i-1}({\Vert \hat{\xi}_{t-i|t}^{\Delta} - {\xi}_{t-i} \Vert}^{2}_{\frac{1}{2}Q^{-1}}
+{\Vert \hat{\zeta}_{t-i|t}^{\Delta} \Vert}^{2}_{R^{-1}})\\
&\leq \sum_{i=1}^{M_t}\gamma^{i-1}({\Vert \hat{\xi}_{t-i|t}^{\Delta} \Vert}^{2}_{Q^{-1}}+
{\Vert {\xi}_{t-i} \Vert}^{2}_{Q^{-1}}
+{\Vert \hat{\zeta}_{t-i|t}^{\Delta} \Vert}^{2}_{R^{-1}})\\
&\quad+\gamma^{M_t}{\Vert \hat{x}_{t-M_t|t}^{\Delta}-\hat{x}_{t-M_t}^{\Delta} \Vert}^{2}_{P_{t-M_t}^{-1}}\\
&\quad+\gamma^{M_t}{\Vert \hat{x}_{t-M_t}^{\Delta}-{x}_{t-M_t} \Vert}^{2}_{P_{t-M_t}^{-1}}\\
&=V_{\rm MHE}(\hat{x}^{\Delta}_{\cdot|t},\hat{\xi}^{\Delta}_{\cdot|t}, \hat{\zeta}^{\Delta}_{\cdot|t})+\sum_{i=1}^{M_t}\gamma^{i-1}
{\Vert {\xi}_{t-i} \Vert}^{2}_{Q^{-1}}\\
&\quad+\gamma^{M_t}{\Vert \hat{x}_{t-M_t}^{\Delta}-{x}_{t-M_t} \Vert}^{2}_{P_{t-M_t}^{-1}}.
\end{aligned}
\end{equation}
According to the property of ``$\Delta$-suboptimality", $V_{\rm MHE}(\hat{x}^{\Delta}_{\cdot|t},\hat{\xi}^{\Delta}_{\cdot|t}, \hat{\zeta}^{\Delta}_{\cdot|t}) \leq V_{\rm MHE}(\hat{x}^{*}_{\cdot|t},\hat{\xi}^{*}_{\cdot|t}, \hat{\zeta}^{*}_{\cdot|t})+\Delta$. Upon the fact that the true underlining system is a  feasible solution, $V_{\rm MHE}({x}_{\cdot},{\xi}_{\cdot}, {\zeta}_{\cdot})$ is a trivial upper bound of $V_{\rm MHE}(\hat{x}^*_{\cdot|t},\hat{\xi}^*_{\cdot|t}, \hat{\zeta}^*_{\cdot|t})$:
\begin{equation}\label{eq.bound for lyapunov function}
\begin{aligned}
&{\Vert \hat{x}_{t}^{\Delta}- x_{t} \Vert}^{2}_{\frac{1}{2}P_{t-M_t}^{-1}} \\
&\leq V_{\rm MHE}({x}_{\cdot},{\xi}{\cdot}, {\zeta}_{\cdot})+\sum_{i=1}^{M_t}\gamma^{i-1}
{\Vert {\xi}_{t-i} \Vert}^{2}_{Q^{-1}}\\
&\quad+\gamma^{M_t}{\Vert \hat{x}_{t-M_t}^{\Delta}-{x}_{t-M_t} \Vert}^{2}_{P_{t-M_t}^{-1}} + \Delta \\
&=2\sum_{i=1}^{M_t}\gamma^{i-1}
{\Vert {\xi}_{t-i} \Vert}^{2}_{Q^{-1}}+2\gamma^{M_t}{\Vert \hat{x}_{t-M_t}^{\Delta}-{x}_{t-M_t} \Vert}^{2}_{P_{t-M_t}^{-1}} \\
& \quad + \Delta\\
&\leq2\gamma^{M_t}\lambda_{\rm max}(P_{t-M_t}^{-1},P_{t-2M_t}^{-1}){\Vert \hat{x}_{t-M_t}^{\Delta}-{x}_{t-M_t} \Vert}^{2}_{P_{t-2M_t}^{-1}} \\
&\quad+2\sum_{i=1}^{M_t}\gamma^{i-1}
{\Vert {\xi}_{t-i} \Vert}^{2}_{Q^{-1}} + \Delta.
\end{aligned}
\end{equation}
By assumption, we define $\rho^{M_t}:=4\lambda_{\rm max}\gamma^{M_t}<1$. Consider $t=kM_t+l, k\in\mathbb{I}_{[0,\infty)}, l\in\mathbb{I}_{[0, M_t-1]}$. Similar to \eqref{eq.bound for lyapunov function}, we can obtain
\begin{equation}\label{eq.error sequence (2)}
\begin{aligned}
&{\Vert \hat{x}_{l}^{\Delta} - x_{l} \Vert}^{2}_{\frac{1}{2}P_{0}^{-1}}\\
&\leq2\sum_{i=1}^{l}\gamma^{i-1}
{\Vert {\xi}_{t-kM_t-i} \Vert}^{2}_{Q^{-1}}+2\gamma^{l}{\Vert \hat{x}_{0}-{x}_{0} \Vert}^{2}_{P_{0}^{-1}}+\Delta.
\end{aligned}
\end{equation}
By applying \eqref{eq.bound for lyapunov function} $k$ times, we arrive at
\begin{equation}
\begin{aligned}
&{\Vert \hat{x}_{t}^{\Delta}- x_{t} \Vert}^{2}_{\frac{1}{2}P_{t-M_t}^{-1}} \\
&\leq \rho^{kM_t}{\Vert \hat{x}_{l}^{\Delta}- x_{l} \Vert}^{2}_{\frac{1}{2}P_{0}^{-1}}\\
&\quad+4\sum_{i=0}^{k-1}\rho^{iM_t}\sum_{j=1}^{M_t}{{\gamma}^{j-1}\Vert {\xi}_{t-iM_t-j} \Vert}^{2}_{\frac{1}{2}Q^{-1}}+\Delta\sum_{i=0}^{k-1}\rho^{iM_t}.
\end{aligned}
\end{equation}
Plugging \eqref{eq.error sequence (2)} into \eqref{eq.error sequence (1)}, we have
\begin{equation}\label{eq.error sequence (1)}
\begin{aligned}
&{\Vert \hat{x}_{t}^{\Delta}- x_{t} \Vert}^{2}_{\frac{1}{2}P_{t-M_t}^{-1}}\\
&\leq \rho^{kM_t}\Big(4\sum_{i=1}^{l}\gamma^{i-1}
{\Vert {\xi}_{t-kM_t-i} \Vert}^{2}_{\frac{1}{2}Q^{-1}}\\
&\qquad\qquad\qquad\qquad\qquad+4\gamma^{l}{\Vert \hat{x}_{0}-{x}_{0} \Vert}^{2}_{\frac{1}{2}P_{0}^{-1}}+\Delta \Big)\\
&\quad+4\sum_{i=0}^{k-1}\rho^{iM_t}\sum_{j=1}^{M_t}{\gamma}^{j-1}{\Vert {\xi}_{t-iM_t-j} \Vert}^{2}_{\frac{1}{2}Q^{-1}}+\Delta\sum_{i=0}^{k-1}\rho^{iM_t}.
\end{aligned}
\end{equation}
By assumption, we have $\rho\ge\eta$, thus
\begin{equation}
\begin{aligned}
&{\Vert \hat{x}_{t}^{\Delta}- x_{t} \Vert}^{2}_{\frac{1}{2}P_{t-M_t}^{-1}}\\
& \leq 4\sum_{i=1}^{l}\rho^{kM_t+i-1}
{\Vert {\xi}_{t-kM_t-i} \Vert}^{2}_{\frac{1}{2}Q^{-1}}+4\rho^{t}{\Vert \hat{x}_{0}-{x}_{0} \Vert}^{2}_{\frac{1}{2}P_{0}^{-1}}\\
&\quad+4\sum_{i=0}^{k-1}\sum_{j=1}^{M_t}\rho^{iM_t+j-1}{\Vert {\xi}_{t-iM_t-j} \Vert}^{2}_{\frac{1}{2}Q^{-1}}+\Delta\sum_{i=0}^{k}\rho^{iM_t}\\
&\leq 2\rho^{t}{\Vert \hat{x}_{0}-{x}_{0} \Vert}^{2}_{P_{0}^{-1}}+2\sum_{i=0}^{t-1}\rho^{i}{\Vert {\xi}_{t-i-1} \Vert}^{2}_{Q^{-1}}
+\Delta\sum_{i=0}^{k}\rho^{iM_t}\\
&\leq 2\rho^{t}{\Vert \hat{x}_{0}-{x}_{0} \Vert}^{2}_{P_{0}^{-1}}+2\sum_{i=0}^{t-1}\sqrt{\rho}^{i}\max_{i\in\mathbb{I}_{[0,t-1]}}\left\{\sqrt{\rho}^{i}{\Vert {\xi}_{t-i-1} \Vert}^{2}_{Q^{-1}}\right\}
\\
&\quad +\frac{\Delta}{1-\rho^{M_t}}
\end{aligned}
\end{equation}
Based on the fact that $\sqrt{a+b}\leq\sqrt{a}+\sqrt{b}$, and
\begin{equation}
\sum_{i=0}^{t-1}\sqrt{\rho}^{i}\leq\frac{1}{1-\sqrt{\rho}},     
\end{equation}
we can easily get \eqref{eq.stability results}.
\end{proof}

\bibliographystyle{ieeetr}
\bibliography{ref}
\end{document}